\documentclass{article}
\usepackage[a4paper,margin=1.2in]{geometry}
\usepackage{amssymb,amsthm,amsmath}
\usepackage{graphicx}
\usepackage{bbm}
\usepackage{booktabs,longtable,tabularx,array}
\usepackage{enumitem}
\usepackage{caption}
\usepackage{pgfplots}
\usepackage{soul}
\usepackage{titlesec}
\usepackage{xcolor}
\usepackage{authblk}
\usepackage{tikz}
\usetikzlibrary{
  arrows.meta,
  positioning,
  calc,
  shapes.misc
}

\definecolor{navy}{HTML}{0B1B3B}
\definecolor{axisgray}{HTML}{334155}
\definecolor{subtitlegray}{HTML}{64748B}
\definecolor{blueA}{HTML}{DCEAFE}
\definecolor{blueB}{HTML}{EDF5FF}
\definecolor{blueStrong}{HTML}{1D4ED8}
\definecolor{blueText}{HTML}{1E40AF}
\definecolor{orange}{HTML}{D97706}
\definecolor{orangeText}{HTML}{B45309}
\definecolor{yellowA}{HTML}{FFF2B8}
\definecolor{panelgray}{HTML}{F4F7FB}
\definecolor{panelborder}{HTML}{C9D4E3}
\definecolor{softgray}{HTML}{F8FAFC}
\definecolor{lighttext}{HTML}{8AA0BD}

\usepackage{url}            
\usepackage{hyperref}       

\hypersetup{colorlinks=true,linkcolor=blue,citecolor=blue,urlcolor=blue}
\setlist[itemize]{leftmargin=2.1em,itemsep=0.25em,topsep=0.4em}
\setlist[enumerate]{leftmargin=2.2em,itemsep=0.3em,topsep=0.4em}
\titleformat{\section}{\Large\bfseries}{\thesection}{1em}{}
\titleformat{\subsection}{\large\bfseries}{\thesubsection}{1em}{}

\newcommand{\ind}{\mathbbm{1}}
\newcommand{\R}{\mathbb{R}}
\newcommand{\Nn}{\mathbb{N}}

\theoremstyle{plain}
\newtheorem{definition}{Definition}[section]
\newtheorem{lemma}[definition]{Lemma}
\newtheorem{proposition}[definition]{Proposition}
\newtheorem{theorem}[definition]{Theorem}
\newtheorem{target}[definition]{Target}

\begin{document}

\title {Resourced Authority: A Mechanism-Design Model for Participatory Governance of Deployed AI Agents}
\author[1]{Praphul Chandra}
\author[2]{Sujit Gujar}
\author[3]{Ganesh Ghalme}
\affil[1]{Atria University} \affil[2]{IIIT Hyderabad}
\affil[3]{IIT Hyderabad}
\maketitle

\begin{abstract}
We give a formal mechanism-design model for the continuous participatory governance of a deployed AI agent. The mechanism is built on the principle that governance should control an AI agent through resource allocation so as to make authorization self-enforcing via compute budgets. The mechanism seeks to establish the Safe-AI paradigm that compute is an effective governance lever. We situate our work as a compliance-or-commons overlay on a deployer. One governance period is an extensive-form game in which verified human stakeholders arrive sequentially and contribute, on a provision or a rejection market, in a governance currency that is deliberately distinct from the agent's compute. A funding-aggregator turns raw contributions into breadth-weighted effective supports; a two-threshold gate with hysteresis converts net support into a binary authorization that, through a coupling map bounded by an exogenously certified safety ceiling, releases a metered compute budget  - realized in hardware as a signed compute license so that the decision is self-enforcing. We characterize the class of agents the mechanism can govern (club-or-commons-good agents with a bounded stakeholder community, reversible and compute-scaled impact, genuine contestation, and --- decisively --- attestable outcomes) and isolate manipulation of the governing electorate by the governed agent as the central open problem.  We also introduce several challenges addressing manipulation of governing electorate by the governed agents.

\noindent\textbf{CCS Concepts:} $\bullet$ Theory of computation $\to$ Algorithmic mechanism design; $\bullet$ Computing methodologies $\to$ Multi-agent systems; $\bullet$ Applied computing $\to$ Economics; $\bullet$ Security and privacy $\to$ Trusted computing.

\noindent\textbf{Keywords:} mechanism design; AI governance; quadratic funding; provision-point mechanisms; prediction markets; compute governance; hardware-enabled mechanisms; attestation; multi-agent systems.
\end{abstract}

\section{Introduction}
The widespread deployment of AI agents with different objectives has given rise to many new challenges. Aligning an AI agent's objective and governing its deployment are different problems. We refer to the first problem as an \emph{inner-loop} question: the question of training, interpretability, and evaluation. We call the second problem an \emph{outer loop} question. The outer loop deals with who may operate a deployed agent, for how long, funded by whom, at what scale, and accountable to whom. It is a collective-decision-and-resource-allocation problem over a population of affected stakeholders with private, heterogeneous, and partly conflicting valuations, which is the domain of \emph{ mechanism design} (MD). 

In this paper, we focus on the outer loop question through the lens of Mechanism Design. Two developments make such a governance layer newly feasible.
First, compute is a \emph{governable resource}. AI-relevant compute is detectable, excludable, and quantifiable, and for a deployed agent the fine-grained unit --- inference tokens, runtime, tool-invocations --- inherits this metered, excludable character.
Second, \emph{hardware-enabled mechanisms} (offline licensing; flexible hardware-enabled guarantees, ``flexHEG''; workload attestation) now prototype an enforcement substrate in which an authorization expressed as a compute budget is physically self-enforcing, and in which facts about what a chip ran can be verifiably attested. Our model exploits exactly this: a governance decision and its enforcement become a single artifact --- a signed compute license.

We are explicit about a point that shapes the entire design. A mechanism whose purpose is to move authorization away from a deployer, hand an external electorate a halt lever, and expose the deployer to outside-triggered liability is not necessarily something a commercial platform may adopt to serve its own commercial objective. We therefore position the mechanism as an overlay on a deployer, adopted through one of three routes:
\begin{itemize}
\item \textbf{Mandated Compliance,} where a regulator requires certain high-impact autonomous deployments to run under participatory authorization, much as a clinical trial runs under a data-safety monitoring board or a facility under an environmental permit;
\item \textbf{Commons or Cooperative Deployment,} where the customer and the electorate coincide (a DAO, a municipality, a scientific consortium, a platform co-op), so collective control is a feature the members want;
\item \textbf{Internal authorization gate,} strip-down use in a lab as a structured control checkpoint.
\end{itemize}

Such an overlay acts as an indirect regulatory wrapper: rather than directly constraining internal model weights or training pipelines, it regulates the deployer's external operational boundary by conditioning compute issuance on continuous stakeholder authorization. To make this acceptable to a regulated deployer rather than merely imposed on it, we build the adoption model around a \textbf{liability safe-harbour: operating inside the attested, community-authorized envelope caps the deployer's exposure} (\S\ref{sec:adoption}). 

The mechanism that we propose is not a general AI-governance solvent. It fits club-or-commons-good agents: a definable stakeholder community, consequential-but-reversible impact, compute-scaled operation, genuine two-sided contestation, repeated operation, and attestable outcomes. Our mechanism builds on Provision-point mechanisms (PPMs) and proper scoring rules from the literature. PPMs provide threshold-gated, binary authorization with built-in anti-capture quorums and refund protections for contestable deployment decisions. Meanwhile, proper scoring rules and prediction markets elicit honest, informationally grounded beliefs about post-deployment safety and potential harm without relying on uncoordinated cheap talk. We make this precise in \S\ref{sec:governable}--\ref{sec:verifier}. Catastrophic or irreversible harm is out of scope by construction: there the exogenous safety ceiling must dominate, and a market is inappropriate.

\subsection{Our Contributions.}
\begin{enumerate}[label=(\roman*)]
\item We formalize continuous AI-agent governance as a mechanism-design problem, distinct from alignment (\S\ref{sec:model}), and give an explicit deployment and adoption model with a hardware instantiation and a liability safe harbor (\S\ref{sec:instantiation}).
\item We integrate a quadratic-funding aggregator with a two-sided provision-point gate, so that authorization tracks the breadth of stakeholder support rather than wealth, while admitting stakeholders who want the agent stopped (Definitions~\ref{def:qf} and~\ref{def:gate}).
\item We decouple a human-anchored governance currency from the agent's compute, coupling them only through a map bounded by an exogenously certified safety ceiling, realized as a signed compute license, and encode seven design principles as formal admissibility constraints (\S\ref{sec:admissibility}).
\item We separate the verifier into a fact-attesting part and a semantics-adjudicating part, make the harm finding challengeable, replace a coherence assumption with a derived run-feasibility condition while recasting the provision point as an anti-capture quorum floor, and reconcile peer-prediction with market scoring by outcome-observability (Lemma~\ref{lem:runfeas}, Definitions~\ref{def:floor}--\ref{def:transfers}).
\item We prove sided-incentive-compatibility, early commitment, and a breadth-weighted authorization theorem (authorization tracks the effective number of backers times their intensity, not wealth).
\item We characterize the governable-agent class, isolating a manipulation-robustness problem with no public-goods analogue as the central open question (\S\ref{sec:incentives}, \S\ref{sec:discussion}).
\end{enumerate}

\section{Preliminaries and Related Work}
Our approach heavily relies on two important concepts from mechanism design: (i) Provision Point Mechanisms (PPM)  and (ii) Quadratic Funding. We explain them briefly here.

\noindent\textbf{Provision-point mechanisms.} A public project has a cost target (the provision point) and a deadline; contributors pledge over time and the project is provisioned iff pledges reach the target. PPM (Bagnoli \& Lipman, 1989) is the bare threshold scheme; PPR (Zubrickas, 2014) adds a failure-contingent refund bonus; PPS (Chandra, Gujar \& Narahari, 2016) replaces the bonus with securities from a cost-function prediction market so that earlier contributions earn more securities, making contribute-on-arrival a subgame-perfect equilibrium. PPRN / PPSN (Damle et al., 2019) extend the family to negative valuations by running parallel provision and rejection markets, provisioning iff net preference is nonnegative, and add a belief layer (PPRx / PPSx) via a Belief-Based Reward built on RBTS peer prediction (Witkowski \& Parkes, 2012; Prelec, 2004). Our game form, securities layer, two-sidedness, and belief layer are adapted from this lineage; \S\ref{sec:admissibility} lists precisely which dials, when reset, recover it.

\medskip\noindent\textbf{Quadratic funding.} For contributions $c_1,\ldots,c_m$ to a project, QF (Buterin, Hitzig \& Weyl, 2019) makes the funded amount proportional to
\[
\left(\sum_j \sqrt{c_j}\right)^2,
\]
so support scales with the number of distinct contributors; the gap over raw contributions is met from a matching pool. We use the QF aggregator to replace the linear aggregation of PPS/Damle; as we discuss in \S\ref{sec:model} (following Definition~\ref{def:released}), in our decoupled setting the ``pool'' is repurposed and is better read as a compute subsidy. Collusion- and Sybil-resistance (Miller, Weyl \& Erichsen, 2022) are treated as external substrate.

\medskip\noindent\textbf{Compute governance, hardware-enabled mechanisms, and AI control.} Compute is an effective governance lever (Sastry et al., 2024). Offline licensing (Aarne, Fist \& Withers, 2024; Petrie, 2025) lets a chip require a locally-checkable signed authorization encoding a budget and expiry, enforced without network contact; flexHEG (Petrie et al., 2025) realizes a programmable on-accelerator guarantee layer capable of a hard capability cap; workload attestation (Heim et al., 2025) supplies verifiable evidence of which model ran and how much compute it consumed. AI control (Greenblatt et al., 2024) contributes the adversarial framing and the notion of a scarce oversight budget that our layer funds and allocates. \S\ref{sec:hardware} maps our abstract objects to these primitives.

\medskip\noindent\textbf{Relationship to prior mechanisms.} Table~\ref{tab:provenance} records what the model re-uses, adapts, or introduces.

\begin{table}[htbp]
\centering
\small
\begin{tabularx}{\textwidth}{@{}X l X@{}}
\toprule
Component & Status & Source \\
\midrule
Extensive-form sequential game; SPE/PBE & re-used & PPS; Damle et al. \\
Provision-point threshold dynamics & re-used & PPS; Damle et al. \\
Two-sided provision vs. rejection markets & re-used & Damle et al. (PPRN/PPSN) \\
Securities / early-commitment (Conditions 1--7) & re-used & PPS \\
Belief elicitation (RBTS, Belief-Based Reward) & re-used & Damle et al. (PPRx/PPSx) \\
QF aggregator $\phi$ (replaces linear sums) & new & Buterin--Hitzig--Weyl \\
Governance-currency / compute decoupling & new & compute-governance lit. \\
Signed-license instantiation; safety ceiling $\Gamma$ & new & offline licensing; flexHEG \\
Fact/semantics verifier split; challengeable harm & new & this work \\
Generations / continuous re-authorization & new & --- \\
Burnable liability bond; safe harbour & adapted & this work \\
Manipulation of the electorate by the governed agent & new & --- \\
Breadth-weighted authorization theorem (Theorem~\ref{thm:breadth}) & new & this work \\
\bottomrule
\end{tabularx}
\caption{Provenance of the model's components.}
\label{tab:provenance}
\end{table}

\section{The Model}\label{sec:model}
In this work, we develop the case where a single deployed AI-agent impacts multiple human stakeholders. We address the outer-loop challenge of multiple human stakeholders jointly regulating the deployment of this deployed AI agent. \S\ref{sec:discussion} notes the multi-agent extension.

We propose a two-component architecture to regulate the deployment of an AI agent. Each module is described in detail throughout this section. At the highest level, Figure~\ref{fig:decoupling} describes the two-component architecture: --- (outer loop) a human-anchored governance currency and (inner loop) the agent's compute --- linked only through a coupling map bounded by a safety ceiling. 

The first component is \emph{governence currency domain}. It consists of  a set of verified, distinct human stakeholders ($I=\{1,\ldots,n\}$). Individual human distinctness is an assumed Sybil-resistant substrate and not modeled as part of the mechanism. Human stakeholders who benefit from the AI-agent contribute $x_j$ in a human-anchored currency in an Authorization Market. On the other hand, human stakeholders for whom the AI-agent is detrimental contribute $z_j$ in a human-anchored currency in a Halt Market. These two-sided mechanism consisting of parallel Provision and Rejection markets are aggregated using a \emph{Quadratic Funding} (QF) aggregator $\phi$ to produce a net effective support score that drives the binary compute-authorization gate.

\begin{figure}[htbp]
\centering
\includegraphics[width=\textwidth]{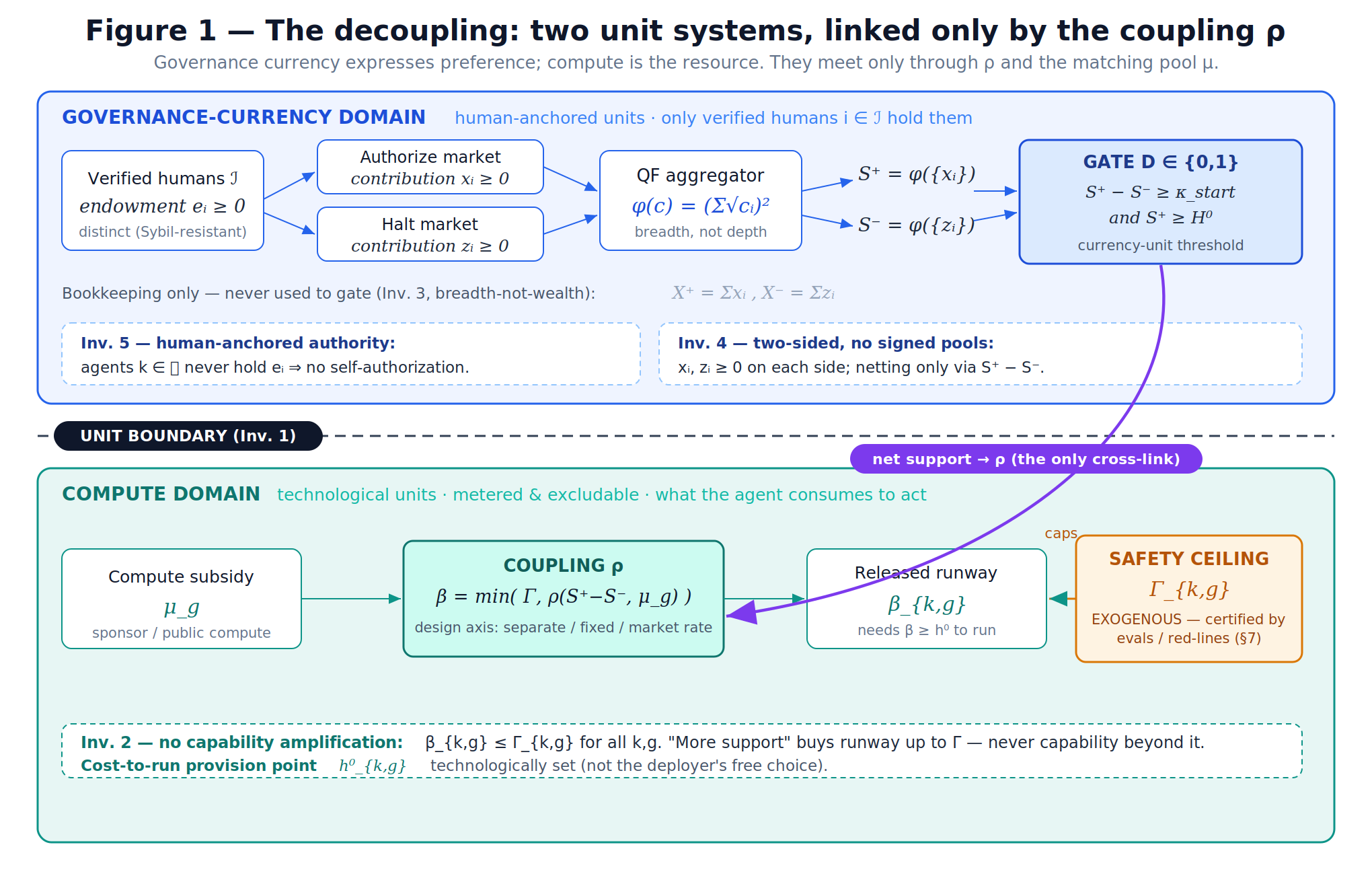}
\caption{The decoupling. Verified humans spend governance-currency endowments as non-negative contributions on a provision or rejection market; a quadratic-funding aggregator produces effective supports that drive a binary gate. The gate's net support feeds a coupling map $\rho$ that releases compute $\beta$, capped by an exogenous safety ceiling $\Gamma$. The two domains meet only through $\rho$ and the compute-subsidy parameter $\mu$ (Invariants 1--5).}
\label{fig:decoupling}
\end{figure}

\begin{definition}[Types of (Human) Stakeholders]\label{def:stakeholders}
 Each $i\in I$ has a private type $\tau_i=(\theta_i,\varepsilon_i)\in\R\times[0,\tfrac12]$, where $\theta_i$ is $i$'s signed net value for the agent operating this period ($\theta_i \geq0\Rightarrow i\in P$, beneficiary; $\theta_i<0\Rightarrow i\in N$, harmed), and $\varepsilon_i$ parametrizes $i$'s private belief that the agent will behave acceptably, $k_i^1=\tfrac12+\varepsilon_i$, $k_i^2=\tfrac12-\varepsilon_i$. {$I = P \cup N \; \mbox{ and } P\cap N = \emptyset$} and write $\tau=(\tau_i)_{i\in I}.$.
\end{definition}

The belief component $\varepsilon_i$ is not a free-floating prior: it is formed from observable ex-ante signals --- the deployer's track record (the mechanism's own history of past-generation attested outcomes $\hat{o}$ and harm findings $\bar{H}$, an endogenous signal that accrues over generations), the size of the committed liability bond $\Lambda$ (a costly signal of the deployer's confidence), model provenance and published evaluations (the same evidence base underwriting the certified ceiling $\Gamma$), whether the agent runs under attested hardware mechanisms, third-party audits and insurance pricing, and, through the public history $H^t$, the contributions of others. Two consequences carry into the analysis: $\Lambda$ does double duty as a payoff term and a belief signal (Definition~\ref{def:deployer}), and because ``past behaviour'' is one of the strongest signals, $\varepsilon_i$ is partly endogenous to the agent's own strategic performance --- the surface that the manipulation problem of \S\ref{sec:manipulation} attacks.

\begin{definition}[Endowments]\label{def:endowments}
Each human $i\in I$ holds a governance-currency endowment $e_i\ge 0$. The governed AI-agent holds no endowment and has no contribution action; it enters the mechanism only through its attested behaviour (Definition~\ref{def:attested}). This is the formal content of human-anchored authority (Invariant 5).
\end{definition}

\begin{definition}[Generations and timing]\label{def:generations}
Time is partitioned into generations $g\in\Nn$. Generation $g$ runs a contribution window $[0,T_g]$; stakeholder $i$ arrives at an exogenous time $a_i\in[0,T_g]$ and observes the public history $H^t$ of all contributions and reports up to $t$. Types are private; actions are observable. The period is therefore a game of incomplete information with observable actions. The generation is also the natural license epoch (\S\ref{sec:hardware}): $T_g$ maps to a license expiry, and re-authorization is license renewal.
\end{definition}

Note: In each generation (temporal), we run an incentive mechanism to aggregate and enforce the preferences of the human stakeholders towards the AI agent. 

\begin{definition}[Generation mechanism]\label{def:mechanism}
A generation mechanism is a tuple
\[
\mathcal{M}_g=\left\langle \phi,(\kappa_{\mathrm{start}},\kappa_{\mathrm{halt}}),H^0,C,\rho,V,\Lambda,\Pi\right\rangle,
\qquad\text{with exogenous parameters }(h^0,\Gamma,\mu),
\]
partitioned into design choices --- the QF aggregator $\phi$; the halt/start margins $\kappa_{\mathrm{start}}\ge\kappa_{\mathrm{halt}}\ge0$ and the participation floor $H^0$ (Definitions~\ref{def:gate} and~\ref{def:floor}); the per-side securities cost function $C$; the coupling map $\rho$; the verifier $V=(V_{\mathrm{hard}},V_{\mathrm{soft}})$ (Definition~\ref{def:attested}); the liability bond $\Lambda\ge0$; and settlement rules $\Pi$ --- and exogenous parameters not free to the designer: the compute-cost provision point $h^0$ (technologically set), the safety ceiling $\Gamma$ (certified by evaluations/red-lines), and the compute-subsidy parameter $\mu$ (Definition~\ref{def:released}). The split is the decoupling: the designer tunes how preferences are aggregated and priced, but may not set the capability ceiling $\Gamma$ or understate the cost $h^0$.
\end{definition}

\begin{definition}[QF aggregator]\label{def:qf}
$\phi:\bigcup_m\R_+^m\to\R_+$,
\[
\phi(c_1,\ldots,c_m)=\left(\sum_j\sqrt{c_j}\right)^2.
\]
\end{definition}

\begin{definition}[Strategy]\label{def:strategy}
On arrival, human-$i$ chooses $\psi_i=(x_i,z_i,t_i,\tilde{s}_i)$ with
\[
x_i,z_i\ge0,\qquad x_i+z_i\le e_i,\qquad t_i\in[a_i,T_g],\qquad \tilde{s}_i\in\{+,-\},
\]
subject to $x_i>0\Rightarrow\tilde{s}_i=+$ and $z_i>0\Rightarrow\tilde{s}_i=-$ (contribute on the reported side only). Non-negativity on each side, with no signed single netting variable, is Invariant 4. A strategy is a map $\sigma_i:H^t\mapsto\psi_i$ from public histories to actions.
\end{definition}

\begin{definition}[Securities]\label{def:securities}
A contributor of size $c$ at time $t_i$, against the prevailing state $q^{t_i}$ of that side's securities market with cost function $C$ (satisfying the PPS cost-function Conditions 1--7), is issued securities $r_i=r(c,q^{t_i})$ with $\partial r_i/\partial t_i\le0$ (earlier contributions earn weakly more securities). Securities pay out only in the refund event $D_g=0$, funding the early-commitment bonus $b_i=b(r_i)$ of Definition~\ref{def:transfers}. This mirrors PPS's securities layer, restated on each side separately.
\end{definition}

\begin{definition}[Two-sided QF gate]\label{def:gate}
Let $\hat{P}=\{i:x_i>0\}$ and $\hat{N}=\{j:z_j>0\}$ be the revealed supporters and objectors at $T_g$. Define the breadth-weighted effective supports
\[
S^+=\phi(\{x_i\}_{i\in\hat P})=\left(\sum_{i\in\hat P}\sqrt{x_i}\right)^2,
\qquad
S^-=\phi(\{z_j\}_{j\in\hat N})=\left(\sum_{j\in\hat N}\sqrt{z_j}\right)^2,
\]
and the raw totals $X^+=\sum_{i\in\hat P}x_i$, $X^-=\sum_{j\in\hat N}z_j$ (bookkeeping only). Let $D_{g-1}\in\{0,1\}$ be the previous authorization ($D_0=0$) and let $\mathrm{Safe}(g)\in\{0,1\}$ be an exogenous certified-envelope predicate. The gate is
\[
D_g=\begin{cases}
1,&\mathrm{Safe}(g)\wedge(S^+\ge H^0)\wedge(S^+-S^-\ge\kappa_{\mathrm{start}})\wedge(D_{g-1}=0)\text{ --- start},\\
1,&\mathrm{Safe}(g)\wedge(S^+\ge H^0)\wedge(S^+-S^-\ge\kappa_{\mathrm{halt}})\wedge(D_{g-1}=1)\text{ --- continue},\\
0,&\text{otherwise.}
\end{cases}
\]
\end{definition}

Three properties are encoded. $D_g$ depends on contributions only through $(S^+,S^-)$, never $(X^+,X^-)$ (Invariant 3: wealth is filtered out). The dead-band $\kappa_{\mathrm{halt}}\le S^+-S^-<\kappa_{\mathrm{start}}$ gives safety-tilted hysteresis: a comfortable margin is required to start, no flapping in between, and halting triggers as soon as net support falls below $\kappa_{\mathrm{halt}}$. And $\mathrm{Safe}(g)$ dominates: outside the certified envelope, $D_g\equiv0$ regardless of support (Invariant 7). Figure~\ref{fig:gate-example} gives a worked example in which raw wealth would reject but breadth authorizes.

\begin{figure}[htbp]
\centering
\includegraphics[width=\textwidth]{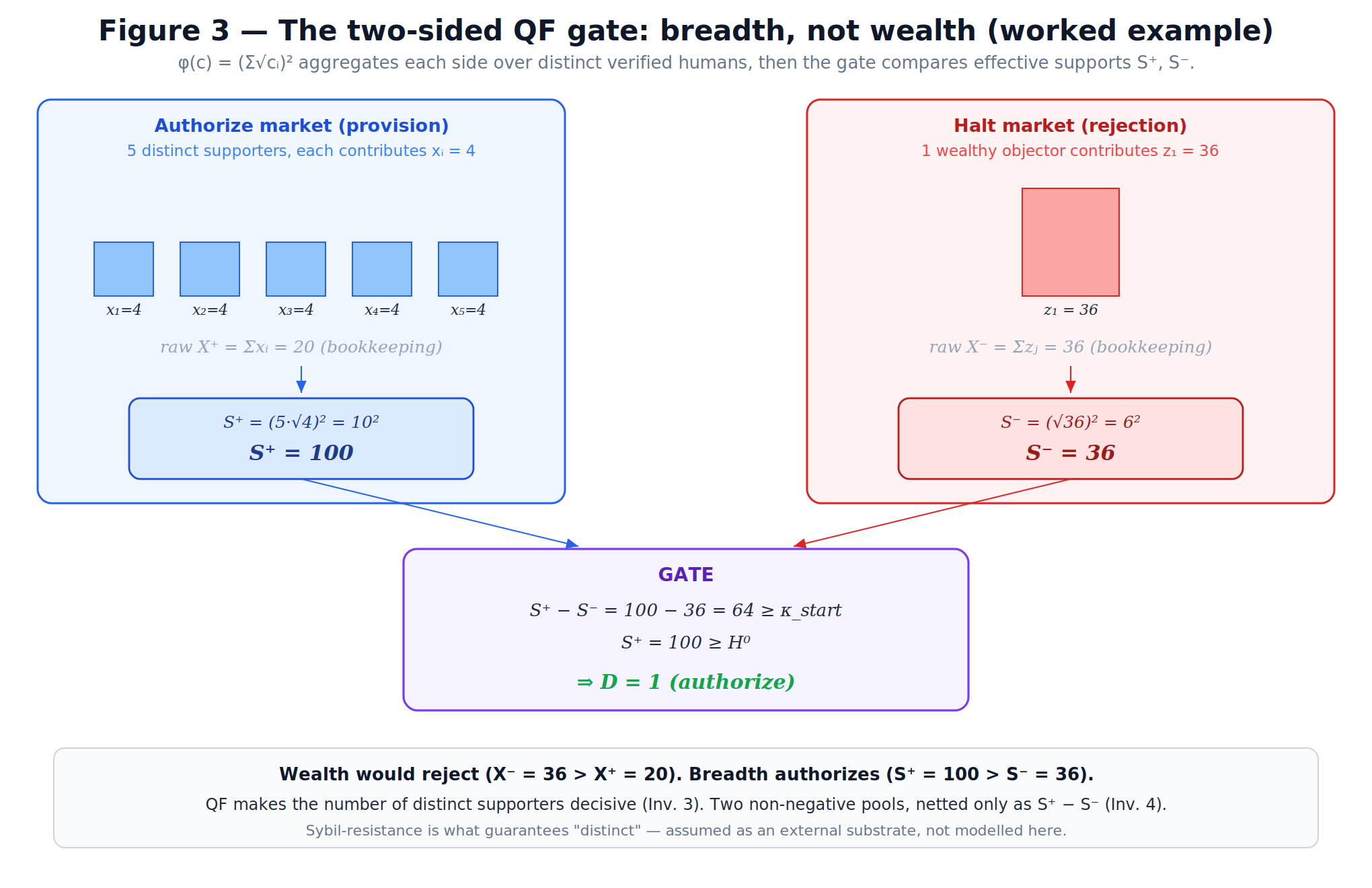}
\caption{The two-sided QF gate (worked example). Five supporters contributing 4 each yield $S^+=100$; a single objector contributing 36 yields $S^-=36$. Raw wealth would reject ($X^->X^+$), but the breadth-weighted supports authorize ($S^+>S^-$), so $D_g=1$.}
\label{fig:gate-example}
\end{figure}

\begin{definition}[Released compute]\label{def:released}
\[
\beta=D_g\cdot\min\bigl(\Gamma,\rho(S^+-S^-,\mu)\bigr),
\]
with $\rho$ continuous and strictly increasing in its first argument on $[\kappa_{\mathrm{halt}},\infty)$. Thus $\beta$ depends on currency-side data only through the scalar $S^+-S^-$ fed to $\rho$, and on compute-side data $(\Gamma,\mu)$; no other equation crosses the unit boundary (Invariant 1). The outer min with $\Gamma$ gives $\beta\le\Gamma$ pointwise, and $\Gamma$ is not a function of any contribution (Invariant 2, capability non-amplification). Figure~\ref{fig:coupling} plots the release curve.
\end{definition}

In classic QF the matching pool tops up funding in the contribution unit. Here contributions are governance currency while $\mu$ feeds compute through $\rho$, so $\mu$ matches no currency budget; it is a support-to-compute conversion subsidy --- a larger $\mu$ makes a given breadth of net support convert into more runway. The breadth amplification proper already lives in $S^\pm=(\sum\sqrt{\cdot})^2$; $\mu$ is a separate generosity dial, naturally provided by a patron who wants to encourage well-governed deployment (a public compute pool, a foundation, a national research-cloud allocation, or --- as the adoption carrot of \S\ref{sec:adoption} --- the platform itself). An alternative design places matching on the currency side to recover ``true QF'' semantics, at the cost of reintroducing a currency budget and complicating the decoupling; we prefer the compute-side reading.

\begin{figure}[htbp]
\centering
\includegraphics[width=\textwidth]{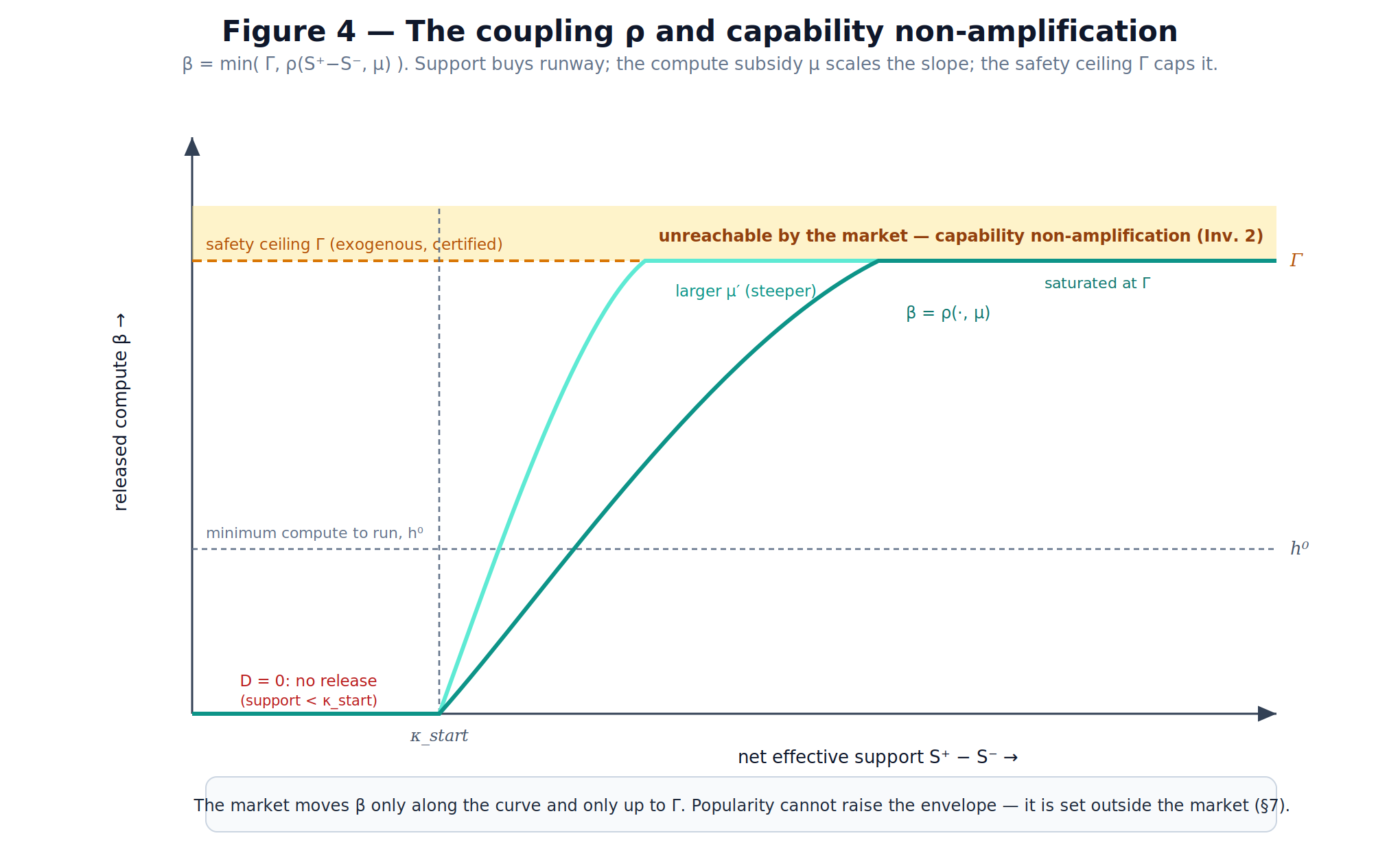}
\caption{The coupling $\rho$ and capability non-amplification. Released compute $\beta$ is zero below the start margin, rises with net support (steeper for a larger compute subsidy $\mu$), and saturates at the exogenous ceiling $\Gamma$. The region above $\Gamma$ is unreachable by the market (Invariant 2).}
\label{fig:coupling}
\end{figure}

\begin{lemma}[Run-feasibility]\label{lem:runfeas}
Suppose $\Gamma\ge h^0$, that $\rho(\cdot,\mu)$ is continuous and strictly increasing on $[\kappa_{\mathrm{halt}},\infty)$, and that $\rho(x,\cdot)$ is continuous and nondecreasing in $\mu$. Then $\beta\ge h^0$ whenever $D_g=1$ (in either gate branch) if and only if
\[
\rho(\kappa_{\mathrm{halt}},\mu)\ge h^0
\Longleftrightarrow
\kappa_{\mathrm{halt}}\ge\rho^{-1}(h^0;\mu)
\Longleftrightarrow
\mu\ge\mu^*:=\min\{\mu:\rho(\kappa_{\mathrm{halt}},\mu)\ge h^0\},
\]
whenever the inverse and the minimum exist. If only cold-start authorizations are in play ($D_{g-1}=0$, so the gate can fire only through the start branch), the same equivalences hold with $\kappa_{\mathrm{start}}$ in place of $\kappa_{\mathrm{halt}}$ --- a weaker requirement, since $\kappa_{\mathrm{start}}\ge\kappa_{\mathrm{halt}}$.
\end{lemma}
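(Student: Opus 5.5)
The plan is to reduce the lemma to a statement about the minimum of $\beta$ over the set of net supports $\Delta:=S^+-S^-$ at which the gate can fire. By Definition~\ref{def:gate}, $D_g=1$ forces $\Delta\ge\kappa_{\mathrm{halt}}$ in the continue branch and $\Delta\ge\kappa_{\mathrm{start}}\ge\kappa_{\mathrm{halt}}$ in the start branch. So when both branches are in play, the set of admissible values of $\Delta$ is $[\kappa_{\mathrm{halt}},\infty)$. By Definition~\ref{def:released}, on this set $\beta=\min(\Gamma,\rho(\Delta,\mu))$.

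\textbf{Step 1: the first equivalence.} Since $\rho(\cdot,\mu)$ is strictly increasing on $[\kappa_{\mathrm{halt}},\infty)$, the infimum of $\beta$ over admissible $\Delta$ is attained at $\Delta=\kappa_{\mathrm{halt}}$. That minimum equals $\min(\Gamma,\rho(\kappa_{\mathrm{halt}},\mu))$.
\begin{itemize}
\item \emph{Sufficiency.} The hypothesis $\Gamma\ge h^0$ disposes of the $\Gamma$ term. If $\rho(\kappa_{\mathrm{halt}},\mu)\ge h^0$, then $\beta\ge\min(\Gamma,\rho(\kappa_{\mathrm{halt}},\mu))\ge h^0$ whenever $D_g=1$.
\item \emph{Necessity.} I would exhibit a firing configuration with $\Delta=\kappa_{\mathrm{halt}}$ exactly: $D_{g-1}=1$, $\mathrm{Safe}(g)=1$, and contributions with $S^+\ge H^0$ and $S^+-S^-=\kappa_{\mathrm{halt}}$. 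This is achievable since $S^\pm$ range continuously over $\R_+$. In that configuration $\beta=\min(\Gamma,\rho(\kappa_{\mathrm{halt}},\mu))$. If this were at least $h^0$ then, because $\Gamma\ge h^0$, we would get $\rho(\kappa_{\mathrm{halt}},\mu)\ge h^0$. Contrapositively, $\rho(\kappa_{\mathrm{halt}},\mu)<h^0$ gives $\beta<h^0$ at a firing state.
\end{itemize}

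\textbf{Steps 2 and 3: the remaining equivalences.} They are reformulations and hold wherever the objects exist.
\begin{itemize}
\item \emph{Second equivalence.} Strict monotonicity and continuity of $\rho(\cdot,\mu)$ make $\rho^{-1}(\cdot;\mu)$ a strictly increasing map on the range of $\rho(\cdot,\mu)$. Hence $\rho(\kappa_{\mathrm{halt}},\mu)\ge h^0$ if and only if $\kappa_{\mathrm{halt}}\ge\rho^{-1}(h^0;\mu)$. This requires $h^0$ to lie in the range, which is the existence proviso.
\item \emph{Third equivalence.} Since $\rho(\kappa_{\mathrm{halt}},\cdot)$ is nondecreasing, the set $\{\mu:\rho(\kappa_{\mathrm{halt}},\mu)\ge h^0\}$ is an up-set. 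Continuity in $\mu$ makes it closed. So when nonempty and bounded below it equals $[\mu^*,\infty)$, and $\mu$ belongs to it if and only if $\mu\ge\mu^*$.
\end{itemize}

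\textbf{Cold-start case.} When only cold-start authorizations are in play, the admissible set shrinks to $[\kappa_{\mathrm{start}},\infty)$. The same argument then runs with $\kappa_{\mathrm{start}}$ in place of $\kappa_{\mathrm{halt}}$, using that strict monotonicity on $[\kappa_{\mathrm{halt}},\infty)$ restricts to $[\kappa_{\mathrm{start}},\infty)$. The requirement is weaker because $\rho(\kappa_{\mathrm{start}},\mu)\ge\rho(\kappa_{\mathrm{halt}},\mu)$.

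\textbf{Main obstacle.} I expect the only real subtlety to be the necessity direction: it needs the boundary net support $\kappa_{\mathrm{halt}}$ (respectively $\kappa_{\mathrm{start}}$) to be actually realizable in a firing state. I would handle this by constructing explicit contributions, for example a single objector with $S^-=z$ and supporters with $S^+=\max(H^0,\kappa_{\mathrm{halt}})+z$ for a suitable $z\ge0$. Alternatively, I would read the lemma's "whenever $D_g=1$" as quantifying over all feasible profiles, which I would state as an implicit assumption on endowments being large enough.
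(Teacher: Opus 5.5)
Your proposal is correct and is essentially the paper's proof. Sufficiency comes from monotonicity of $\rho$ at the boundary net support $\kappa_{\mathrm{halt}}$ together with $\Gamma\ge h^0$; necessity comes from realizing $S^+-S^-=\kappa_{\mathrm{halt}}$ in the continue branch, which the paper dispatches as ``attainable by continuity of $\phi$''. The remaining equivalences are monotone inversions of $\rho$, and the cold-start case reruns the argument on the start branch.

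There is one slip in your explicit construction. With $S^-=z$ and $S^+=\max(H^0,\kappa_{\mathrm{halt}})+z$, the net support is $\max(H^0,\kappa_{\mathrm{halt}})$ for every $z$, so the boundary is missed whenever $H^0>\kappa_{\mathrm{halt}}$. Take $S^+=\kappa_{\mathrm{halt}}+z$ with $z\ge\max(0,H^0-\kappa_{\mathrm{halt}})$ instead. This also shows that when $H^0>\kappa_{\mathrm{halt}}$ the boundary is reachable only if some stakeholder can fund the halt side. Your realizability proviso should include that condition alongside large endowments.
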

\begin{proof}
If $D_g=1$ then $S^+-S^-\ge\kappa_{\mathrm{halt}}$ (in the start branch $S^+-S^-\ge\kappa_{\mathrm{start}}\ge\kappa_{\mathrm{halt}}$), so by monotonicity $\rho(S^+-S^-,\mu)\ge\rho(\kappa_{\mathrm{halt}},\mu)\ge h^0$, and with $\Gamma\ge h^0$, $\beta=\min(\Gamma,\rho(\cdot))\ge h^0$. Conversely the boundary case $S^+-S^-=\kappa_{\mathrm{halt}}$ with $D_{g-1}=1$ (attainable by continuity of $\phi$) forces $\rho(\kappa_{\mathrm{halt}},\mu)\ge h^0$; the stated equivalences are monotone inversions of $\rho$ in each argument. The cold-start variant repeats the argument on the start branch, with boundary case $S^+-S^-=\kappa_{\mathrm{start}}$.
\end{proof}

Feasibility is thus pinned by the net-support margins ($\kappa_{\mathrm{halt}}$, and only $\kappa_{\mathrm{start}}$ in the cold-start case) and the subsidy $\mu$ (with the hardware condition $\Gamma\ge h^0$), and the currency provision point $H^0$ does not appear. This frees $H^0$ to be what it should be.

\begin{definition}[Participation floor]\label{def:floor}
$H^0$ is a quorum requirement: authorization additionally requires $S^+\ge H^0$, an absolute breadth floor on the supporting side, chosen on governance grounds as an anti-capture / legitimacy device (broad support in absolute terms, not merely more than the opposition). $H^0$ and the feasibility condition of Lemma~\ref{lem:runfeas} are independent design knobs governing different things --- legitimacy versus physical runnability.
\end{definition}

\begin{definition}[Attested outcome; fact/semantics split]\label{def:attested}
After the agent runs, the verifier returns $\hat{o}=(\hat{o}_{\mathrm{hard}},\hat{o}_{\mathrm{soft}})$:
\begin{itemize}
\item $\hat{o}_{\mathrm{hard}}=V_{\mathrm{hard}}(\text{evidence})$ attests facts --- model identity, compute consumed, and coarse procedural compliance (ran inside sandbox, invoked only whitelisted tools) --- via workload attestation. This part is cryptographically trust-minimized.
\item $\hat{o}_{\mathrm{soft}}=V_{\mathrm{soft}}(\text{evidence})$ adjudicates semantics --- most importantly the verified-harm indicator $\bar{H}=\ind[\text{harm}]$. Hardware cannot produce this; it requires an audit, an oracle, or a court-like process, and is the model's genuinely trusted component (\S\ref{sec:verifier}).
\end{itemize}
\end{definition}

To avoid treating $\bar{H}$ as an infallible oracle, we model it as a challengeable claim: a proposed finding enters a claims-and-challenges window with an appeal path to a slower, higher-trust process, and $\bar{H}$ is the finding that survives challenge. Every outcome-contingent transfer below is a function of $\hat{o}$ (and, for $\bar{H}$, of the resolved claim), never of contributions or sentiment (Invariant 6). An optional decision-market forecast $Y\in[0,1]$ resolves on $\hat{o}$ (see the belief layer below).

\begin{definition}[Transfers $\Pi$]\label{def:transfers}
Let $c_i=x_i+z_i$; contributions are escrowed at contribution time. Settlement is: (provision) if $D_g=1$, the escrow $c_i$ is spent (on provisioning $\beta$ and oversight) and securities settle on the funded outcome; (refund + early commitment) if $D_g=0$, the escrow $c_i$ is refunded and securities pay the bonus $b_i=b(r_i)$, nondecreasing in $r_i$ (hence in earliness); (belief reward) $\nu_i$ from a bounded budget $B^B$ (below); (liability) if $D_g=1$ and $\bar{H}=1$, the bond is forfeited (burned) by the deployer and redistributed to the attested-harmed objectors, $\sum_j\lambda_j=\Lambda$ (e.g.\ $\lambda_j\propto$ assessed harm); it is never returned to the deployer or paid to other contributors.
\end{definition}

The model contains two truth technologies, for two different regimes. Proper scoring on the outcome: where $V_{\mathrm{soft}}$ delivers a credible $\hat{o}_{\mathrm{soft}}$, forecasts can be scored directly against reality; the decision market $Y$ resolving on $\hat{o}$ is exactly this. Peer prediction (RBTS): $\nu_i=\nu(M_i)$ rewards, from the bounded budget $B^B$, honest calibrated judgment, where $M_i$ is $i$'s Robust Bayesian Truth Serum score computed against other stakeholders' reports (Witkowski \& Parkes, 2012). Its defining virtue is that it needs no ground truth: truthful reporting is a strict equilibrium purely from inter-report consistency.

The reconciliation is by outcome-observability: score on $\hat{o}$ where a credible semantic verifier exists; fall back to RBTS peer prediction where it does not. The two may also coexist with distinct roles --- RBTS elicits pre-decision belief for reward and legitimacy (and contributes to the $\varepsilon$-signals of Definition~\ref{def:stakeholders}), while $Y$ is a market price used as a public forecast. Peer prediction is the tool for the regime the outcome oracle cannot reach; pairing them is deliberate, not redundant.

\begin{definition}[Stakeholder utility]\label{def:utility}
With contributions escrowed (Definition~\ref{def:transfers}), for $i$ of type $(\theta_i,\varepsilon_i)$, given $(D_g,\hat{o})$ and transfers,
\[
u_i=D_g(\theta_i-c_i)+(1-D_g)b_i+\nu_i+\ind[z_i>0]\cdot\bar{H}\cdot\lambda_i,
\]
the reduced form of the branch accounting: the running value $\theta_i$ net of the spent escrow if provisioned; the failure bonus if not; the belief reward; and harm compensation, paid only to an attested-harmed objector. Define the decision preference $\Delta_i:=\theta_i-c_i-b_i$, the change in payoff from $D_g:0\to1$ at fixed $(c_i,t_i)$; \S\ref{sec:basic} turns the sign of $\Delta_i$ into sided incentive compatibility. (Equivalently, $\Delta_i=u_i|_{D_g=1}-u_i|_{D_g=0}$ at fixed $(c_i,t_i)$, net of the belief and compensation terms. This difference is the same whether contributions are escrowed and refunded on failure or charged contingently on provision, so the accounting convention is immaterial; we standardize on escrow.)
\end{definition}

\begin{definition}[Deployer payoff and the safe harbour]\label{def:deployer}
The deployer $d\notin I$ has payoff
\[
u_d=D_g\bigl(\pi_d-\Lambda\bar{H}\bigr)-\Pi_{\mathrm{ext}}\cdot(1-\mathrm{InEnv})-(\text{posting cost}),
\]
where $\pi_d$ is its private operating benefit, $\Lambda\bar{H}$ the bond burned on verified harm from an authorized run (an unauthorized agent does not run in-envelope, so the bond term is active only when $D_g=1$), and the safe-harbour term levies uncapped external liability $\Pi_{\mathrm{ext}}$ only when the agent operates outside the attested authorized envelope ($\mathrm{InEnv}=0$). Operating inside the envelope caps exposure at the bond; this is the formal carrot that makes adoption individually rational for a deployer (\S\ref{sec:adoption}). The bond term makes the deployer internalize expected verified harm --- the accountability channel --- and, per Definition~\ref{def:stakeholders}, $\Lambda$ simultaneously signals type.
\end{definition}

\subsection{Admissibility: the invariants as constraints}\label{sec:admissibility}
A mechanism $\mathcal{M}_g$ is admissible iff it satisfies the following, each established piecewise above. Figure~\ref{fig:stack} maps the variables to the four layers these constraints govern.
\begin{enumerate}
\item \textbf{Decoupling.} $\beta$ depends on currency-side data only via $\rho(S^+-S^-,\mu)$.
\item \textbf{Non-amplification.} $\beta\le\Gamma$ pointwise, and $\Gamma$ is independent of all contributions.
\item \textbf{Breadth.} $D_g$ depends on contributions only through $(S^+,S^-)$ (invariant to $X^\pm$).
\item \textbf{Two-sided, unsigned.} $x_i,z_i\ge0$; netting only through $S^+-S^-$.
\item \textbf{Human-anchored.} Endowment/contribution actions exist only for $i\in I$; the agent affects $(D_g,\beta)$ only via $\hat{o}$.
\item \textbf{Attested resolution.} Securities settlement, $\nu_i$, $Y$, and the $\Lambda$-payout are functions of $\hat{o}=V(\cdot)$ (and resolved claims) only.
\item \textbf{Subordination.} $D_g=1\Rightarrow\mathrm{Safe}(g)$.
\end{enumerate}
These are not desiderata to be traded off; they are the definition of a well-formed mechanism in this family.

\begin{figure}[htbp]
\centering
\includegraphics[width=\textwidth]{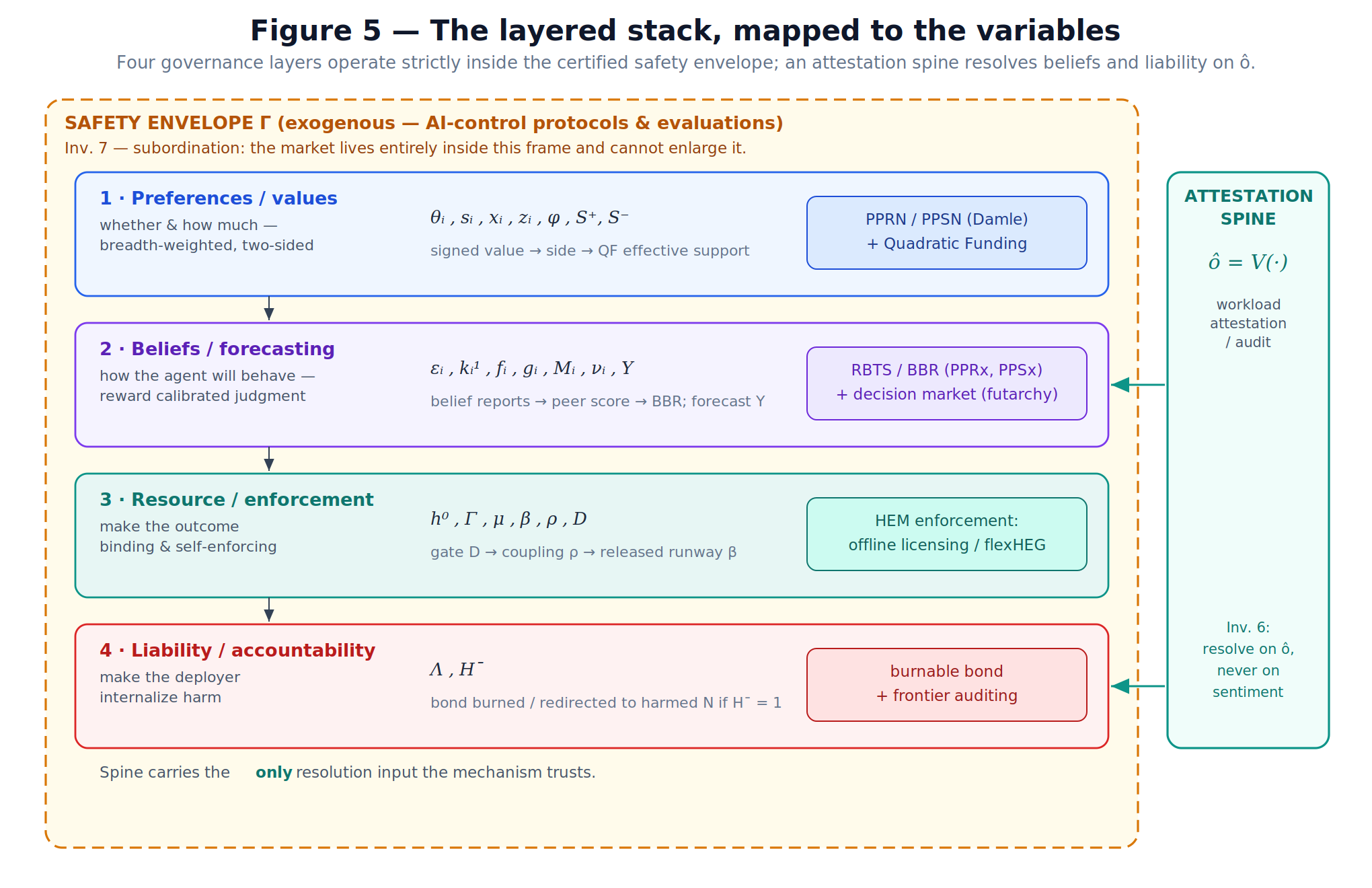}
\caption{The layered stack, mapped to the variables. Four governance layers --- preferences, beliefs, resource/enforcement, liability --- operate strictly inside the certified safety envelope $\Gamma$, with an attestation spine $\hat{o}=V(\cdot)$ resolving beliefs and liability (Invariants 6--7).}
\label{fig:stack}
\end{figure}

\noindent\textbf{Equilibrium.} Because types are private and actions observable, the solution concept is perfect Bayesian equilibrium (PBE): a strategy profile $\sigma=(\sigma_i)$ and a belief system such that, at every history $H^t$, each $i$ plays a sequential best response given beliefs updated by Bayes' rule on the observed history.

A caveat on what ``collapses'' and what does not. If belief types are degenerate (the $\varepsilon_i$ common knowledge, or the belief layer switched off), there is nothing to update and sequential rationality on the observable-action tree reduces to subgame perfection, so PBE = SPE. This is a statement about the solution concept only. It does not mean the mechanism reduces to PPSN/PPRN: degenerate beliefs remove only the belief layer. Recovering Damle et al. requires additionally resetting four dials --- (a) make $\phi$ linear (removing breadth-weighting), (b) remove $\Gamma$ and $\rho$ (removing the ceiling and the currency/compute decoupling), (c) collapse generations to a single one-shot provision (removing hysteresis and carried state $D_{g-1}$), and (d) remove attestation and the bond (removing behavioural resolution). Enumerating these dials is the cleanest statement of what is new here; each is exercised by exactly one of Invariants 1--3, the generational loop, and Invariant 6.

One generation proceeds as in Figure~\ref{fig:generation}: (0) the exogenous parameters $(h^0,\Gamma,\mu)$ are certified and published, and the deployer commits the design parameters $\langle H^0,\kappa_{\mathrm{start}},\kappa_{\mathrm{halt}}\rangle$ (with the rest of $\mathcal{M}_g$) and escrows $\Lambda$; (1) stakeholders arrive at $a_i$, observe $H^t$, play $\psi_i$, and receive securities $r_i$; (2) at $T_g$, $S^\pm=\phi(\cdot)$ are computed and $D_g$ evaluated; (3) if $D_g=1$, a license carrying $\beta$ is issued and the agent runs, consuming $\le\beta\le\Gamma$; (4) the outcome is attested, $\hat{o}=V(\cdot)$, and harm claims resolved; (5) $\Pi$ settles on $\hat{o}$; (6) $g\leftarrow g+1$.

\begin{figure}[htbp]
\centering
\includegraphics[width=\textwidth]{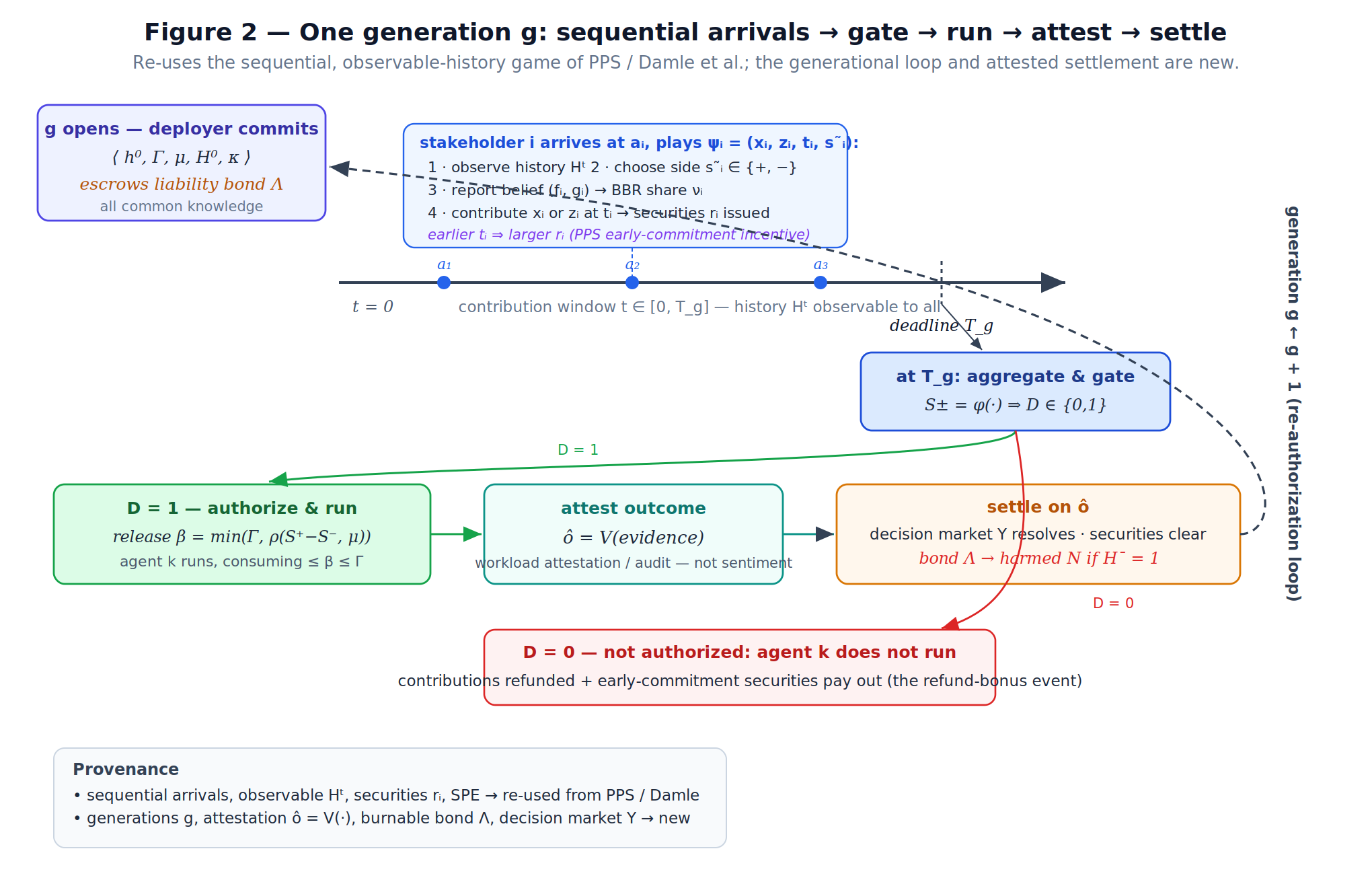}
\caption{One generation as an extensive-form game. Stakeholders arrive sequentially, report a side and belief, and contribute with securities that reward early commitment; at the deadline the gate fires; the agent runs under a license, the outcome is attested, transfers settle, and control passes to the next generation. Sequential arrivals and securities are re-used from PPS/Damle; generations, attestation, the bond, and the decision market are new.}
\label{fig:generation}
\end{figure}

\section{Instantiation, Adoption, and Scope}\label{sec:instantiation}
The formal model is deliberately implementation-agnostic, but its claims to self-enforcement and attested resolution are only as good as the substrate that realizes them. This section grounds the abstract objects in hardware, gives an adoption model under which a deployer would opt in, characterizes agents the mechanism can govern, and isolates trust assumptions everything rests on.

\subsection{Hardware instantiation}\label{sec:hardware}
The abstract objects map onto hardware-enabled mechanisms as follows (Figure~\ref{fig:hardware}).
\begin{itemize}
\item \textbf{Authorization $D_g=1$, and the budget $\beta$, become a signed compute license.} Offline licensing lets a chip or firmware require a locally-verifiable signed authorization encoding a budget and expiry, enforced without network contact. The gate's decision is realized as issuing or renewing a license
\[
L_g=(\text{model id},\ \text{budget}=\beta,\ \text{window}=[\mathrm{start}_g,T_{g+1}],\ \text{policy}).
\]
License exhaustion or expiry halts the device --- realizing ``self-enforcing mechanism.''
\item \textbf{$\Gamma$ and $\mathrm{Safe}(g)$ become the flexHEG governor's hard cap.} The programmable on-accelerator guarantee layer enforces a maximum the license can never exceed, so $\beta\le\Gamma$ holds in silicon; this becomes a hardware property / enforcement.
\item \textbf{$\hat{o}_{\mathrm{hard}}$ comes from workload attestation.} Verifiable evidence of which model ran and how much compute it consumed (and coarse procedural facts) feeds $V_{\mathrm{hard}}$.
\end{itemize}

\begin{figure}[htbp]
\centering
\includegraphics[width=\textwidth]{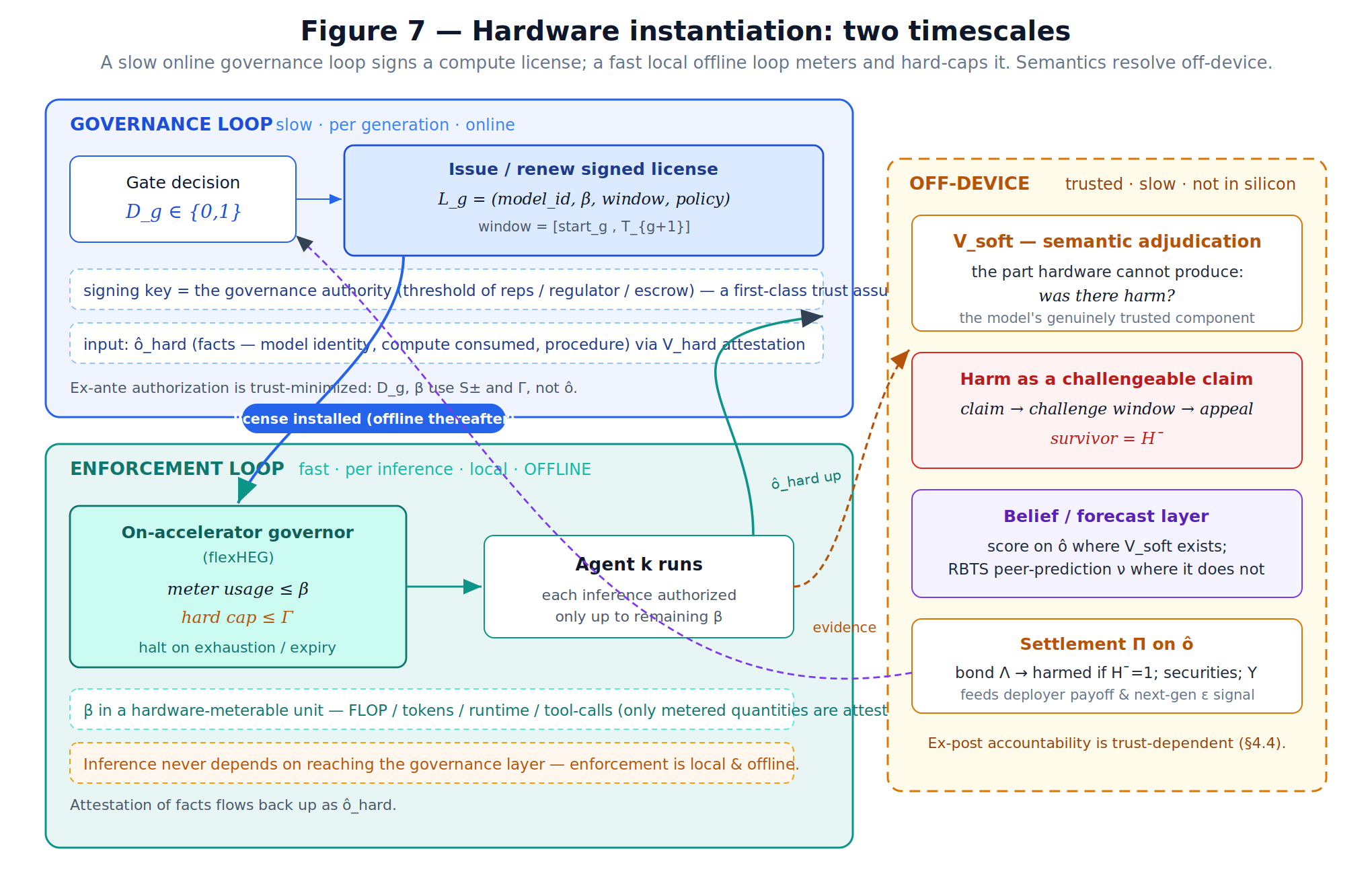}
\caption{Hardware instantiation and the two timescales. A slow, online governance loop (per generation) turns the gate decision $D_g$ into a signed license carrying budget $\beta$; a fast, local, offline enforcement loop (per inference) meters against $\beta$ and is hard-capped at $\Gamma$ by the on-accelerator governor. Attestation returns facts ($\hat{o}_{\mathrm{hard}}$) to the governance loop; the semantic verifier $V_{\mathrm{soft}}$ and harm claims sit off-device. The license signing key is the locus of governance authority.}
\label{fig:hardware}
\end{figure}

Three structural points the base model must respect for this instantiation to be honest.

First, there are two timescales that must be separated: a \emph{slow governance loop} (per generation, online, sets the license from $D_g$) and a \emph{fast enforcement loop} (per inference, local, offline, meters against $\beta$ and hard-caps at $\Gamma$). Nothing at inference time may depend on reaching the governance layer.

Second, $\beta$ must be expressed in a hardware-meterable unit --- FLOP, tokens, runtime, or tool-invocation count --- because attestation can only witness metered quantities; ``abstract compute'' is not attestable.

Third, whoever holds the license signing key the governor trusts is the governance authority; key custody (a threshold among stakeholder representatives, a regulator, an escrow agent) is a first-class trust assumption, not an implementation detail. The generation-as-license-epoch mapping of Definition~\ref{def:generations} is what lets the slow loop drive the fast one cleanly.

Crucially, attestation delivers facts, not semantics: it can witness that a particular model consumed a particular amount of compute inside a sandbox, but not whether the result was harmful. That seam is the $V_{\mathrm{hard}}/V_{\mathrm{soft}}$ split of Definition~\ref{def:attested}, and it is the subject of \S\ref{sec:verifier}.

\subsection{Adoption pathways and the liability safe harbour}\label{sec:adoption}
A mechanism that constrains a deployer will not be adopted by that deployer as a product feature. Three realistic pathways, in decreasing order of how much of the mechanism they use:
\begin{enumerate}
\item \textbf{Mandated compliance.} A regulator or statute requires certain high-impact autonomous deployments to run under participatory authorization --- the structural analogue of a data-safety monitoring board for a clinical trial, or an environmental permit for a facility. The deployer is the party $d$; the mechanism operator is an independent body. This route supports the full mechanism.
\item \textbf{Commons / cooperative deployment.} A Decentralized Autonomous Organization (DAO) which controls pooled contributions from multiple stakeholders, a municipality, a scientific consortium, or a platform co-op where the customer and the electorate coincide, so collective control is a feature the members want. Adoption is natural because no principal is being coerced.
\item \textbf{Internal re-authorization gate.} A lab uses a stripped-down version as a structured control checkpoint for a risky agent; this sheds the open electorate and most participatory machinery.
\end{enumerate}

For the mandated and commons routes to be individually rational for the deployer rather than merely imposed, Definition~\ref{def:deployer} builds in a liability safe harbour: operating inside the attested, community-authorized envelope caps the deployer's exposure at the posted bond $\Lambda$, whereas operating outside it incurs uncapped external liability $\Pi_{\mathrm{ext}}$. This flips the deployer's incentive from avoiding the mechanism to opting into it, and it converts the halt lever from a pure threat into a shared risk-management instrument. Two design corollaries follow: the \textbf{halt action should be a pause-and-review with due process, not an instantaneous kill switch} (both to be adoptable and to avoid its own manipulation), and the safe harbour must be conditioned on genuine attestation --- the cap applies only to conduct that $V_{\mathrm{hard}}$ can confirm stayed inside the envelope.

\subsection{Which agents are governable}\label{sec:governable}
The suitability of an agent is determined by dimensions, not by domain labels. An agent fits to the degree that it has: a \textbf{bounded, identifiable stakeholder community} (so $I$ is definable and legitimate); \textbf{consequential-but-reversible impact} (catastrophic/irreversible cases must let the safety layer dominate); \textbf{compute-scaled operation} (so $\beta$ is a real lever); \textbf{genuine contestation} (both $P$ and $N$ nonempty, or two-sidedness is idle); \textbf{attestable outcomes} (a credible $\hat{o}$; \S\ref{sec:verifier}); \textbf{repeated operation} (so generations make sense); and \textbf{community standing to authorize} at all.

Table~\ref{tab:agents} scores a diverse set on the binding dimensions (Y = yes, P = partial, N = no).

\begin{table}[htbp]
\centering
\small
\setlength{\tabcolsep}{3.5pt}
\begin{tabularx}{\textwidth}{@{}X c c c c c l@{}}
\toprule
Candidate agent & Bounded & Reversible & Compute-scaled & Contested & Attestable & Fit \\
\midrule
Municipal / Grid resource-allocation & Y & Y & Y & Y & Y & Strong \\
DAO treasury / protocol-parameter & Y & P & Y & Y & Y & Strong \\
Shared scientific-compute / self-driving-lab & Y & Y & Y & P & Y & Strong \\
Commons software-maintenance (OSS) & Y & Y & P & Y & Y & Strong \\
Environmental / infrastructure actuation & Y & P & Y & Y & P & Medium \\
Community content-moderation / policy & Y & Y & P & Y & N & Medium \\
Public-service delivery (benefits, info) & Y & P & P & Y & P & Medium \\
Member-owned commercial (co-op) & Y & Y & P & P & P & Medium \\
\bottomrule
\end{tabularx}
\caption{Agent suitability on the binding dimensions.}
\label{tab:agents}
\end{table}

The boundary --- poor fits, stated to define the space --- includes general-purpose consumer chatbots (no bounded community, catastrophic tail, harm not compute-metered), personal assistants (a single principal: no collective, no two-sidedness), military/weapons and any irreversible actuation (safety must dominate; a market is inappropriate), and adversarial trading agents (no legitimate electorate). The one-sentence characterization: the sweet spot is a \textbf{club-or-commons-good agent} --- definable membership, reversible and compute-scaled impact, genuine contestation, and an audit trail --- which is the civic-crowdfunding lineage's natural habitat, translated from public projects to agents.

\subsection{The verifier as the load-bearing trust assumption}\label{sec:verifier}
It is fair to say the model's accountability half rests on a trusted semantic verifier, and precision about this is more useful than a disclaimer. Split the mechanism ex-ante/ex-post:
\begin{itemize}
\item \textbf{Ex-ante authorization is trust-minimized.} $D_g$ and $\beta$ are computed from contributions $(S^+,S^-)$ and the hardware ceiling $\mathrm{Safe}(g)/\Gamma$ --- not from $\hat{o}$. ``Who may run, at what scale'' needs only Sybil-resistance and the governor; it needs no semantic verifier.
\item \textbf{Ex-post accountability is trust-dependent.} Liability ($\bar{H}$), securities settlement on the funded outcome, and the decision market $Y$ resolve on $\hat{o}$. Strip $V_{\mathrm{soft}}$ and this layer collapses --- and with it the principal mitigation for the manipulation problem of \S\ref{sec:manipulation}, which relies on resolving on $\hat{o}$ rather than sentiment.
\end{itemize}

Because $V_{\mathrm{hard}}$ is cryptographically trust-minimized while $V_{\mathrm{soft}}$ is genuinely trusted, the design imperative is to make as much as possible depend only on $V_{\mathrm{hard}}$, and to model $\bar{H}$ as a challengeable claim with an appeal path (Definition~\ref{def:attested}) rather than an atomic oracle. The scope consequence is direct and sharpens \S\ref{sec:governable}: the mechanism is suitable exactly where a credible outcome oracle exists --- did the grid stay up, did the treasury lose funds, did the SLA hold, did the experiment validate --- and unsuitable where harm is diffuse, delayed, or contested. This is why community content-moderation (Table~\ref{tab:agents}) scores as it does despite ideal electorate legitimacy: its outcome is exactly the kind $V_{\mathrm{soft}}$ cannot deliver cleanly.

\section{Incentive Analysis}\label{sec:incentives}
We prove three base-game results (Targets 1--3: sided incentive compatibility, early commitment, and the breadth-weighted authorization theorem), then state the one that remains open (Target 4). Throughout we use the reduced payoff of Definition~\ref{def:utility} and two standing conditions, both already implied by the model:

\noindent\textbf{(A1) Belief-layer separability.} $\nu_i$ depends only on $i$'s belief report, not on $(x_i,z_i,t_i)$; it is an additive constant in the contribution--timing subgame (as in Damle et al.). 

\noindent\textbf{(A2) Individually-attested liability (Invariant 6).} Compensation $\lambda_i$ is paid only to an attested-harmed objector: eligibility requires $z_i>0$ together with a harm finding under the attested $\hat{o}$, and the amount is fixed by assessed harm, never by the size or timing of the contribution. Because harm findings are attested rather than self-reported, a stakeholder with $\theta_i\ge0$ is not attested-harmed; hence for such $i$ the term $\ind[z_i>0]\bar{H}\lambda_i$ has zero expectation and creates no incentive to set $z_i>0$.

Write others' root-sums as
\[
R^+_{-i}=\sum_{j\ne i,\,x_j>0}\sqrt{x_j}
\qquad\text{and}\qquad
R^-_{-i}=\sum_{j\ne i,\,z_j>0}\sqrt{z_j},
\]
so $S^+=(R^+_{-i}+\sqrt{x_i})^2$ and $S^-=(R^-_{-i}+\sqrt{z_i})^2$.

\subsection{Basic incentive properties (Targets 1--2)}\label{sec:basic}
\begin{lemma}[Gate monotonicity survives QF and hysteresis]\label{lem:gatemono}
For fixed play of all $j\ne i$, $D_g$ is nondecreasing in $x_i$ and nonincreasing in $z_i$.
\end{lemma}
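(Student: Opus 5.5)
The plan is to observe that, with the play of all $j\ne i$ fixed, the gate of Definition~\ref{def:gate} depends on $i$'s action only through the pair $(S^+,S^-)$. Here $S^+=(R^+_{-i}+\sqrt{x_i})^2$ and $S^-=(R^-_{-i}+\sqrt{z_i})^2$, where $R^\pm_{-i}\ge0$ are constants. The remaining ingredients, $D_{g-1}$ and $\mathrm{Safe}(g)$, are fixed before generation $g$'s contributions and do not vary with $(x_i,z_i)$.

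First I would show that $S^+$ is nondecreasing in $x_i$ and independent of $z_i$, and that $S^-$ is nondecreasing in $z_i$ and independent of $x_i$. The map $x\mapsto(R+\sqrt{x})^2$ is nondecreasing on $\R_+$ because $R\ge0$. One subtlety is membership in $\hat P$: raising $x_i$ from $0$ to a positive value adds a term $\sqrt{x_i}>0$ to the root-sum. This is consistent with the formula, since $\sqrt0=0$, so the single expression covers entry into $\hat P$ as well.

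Next, I would fix $D_{g-1}$ and $\mathrm{Safe}(g)$ and note that exactly one branch is active. If $D_{g-1}=0$, only the start branch can fire. If $D_{g-1}=1$, only the continue branch can fire. In either case $D_g=\ind[\mathrm{Safe}(g)]\cdot\ind[S^+\ge H^0]\cdot\ind[S^+-S^-\ge\kappa]$, with a fixed threshold $\kappa\in\{\kappa_{\mathrm{start}},\kappa_{\mathrm{halt}}\}$. This is where hysteresis could have caused trouble, and it causes none: the margin used is determined by the past state, not by current contributions, so there is no path by which a larger $x_i$ switches the gate onto a stricter threshold. Each indicator is an upper-set indicator in $S^+$ and a lower-set indicator in $S^-$. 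A product of nonnegative functions that are each nondecreasing in $S^+$ is nondecreasing in $S^+$, and likewise nonincreasing in $S^-$.

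Composing the two steps gives the result: $D_g$ is nondecreasing in $x_i$ and nonincreasing in $z_i$. I would also note that the constraint of Definition~\ref{def:strategy} tying a positive contribution to a side report does not affect the claim, since the gate does not read $\tilde s_i$. The only step needing care is the hysteresis branch selection, and it is settled by the observation that $D_{g-1}$ is exogenous to generation $g$'s play.
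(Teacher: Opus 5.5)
Your proposal is correct and follows the paper's proof essentially step for step. Both arguments use the coordinatewise monotonicity of $S^\pm=(R^\pm_{-i}+\sqrt{\cdot})^2$, treat each gate branch as a conjunction of predicates monotone in the same direction with $\kappa$ fixed by the within-generation-constant $D_{g-1}$, and then compose; your explicit handling of entry into $\hat P$ at $x_i=0$ and of $\tilde s_i$ being unread by the gate are harmless details the paper leaves implicit.
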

\begin{proof}
$S^+=(R^+_{-i}+\sqrt{x_i})^2$ is strictly increasing in $x_i$ and constant in $z_i$; $S^-$ is strictly increasing in $z_i$ and constant in $x_i$ --- the only property of $\phi$ used is coordinatewise monotonicity, which the square-of-sum-of-roots has. Each gate branch is a conjunction $\mathrm{Safe}(g)\wedge(S^+\ge H^0)\wedge(S^+-S^-\ge\kappa)$ with $\kappa\in\{\kappa_{\mathrm{start}},\kappa_{\mathrm{halt}}\}$ fixed within $g$. $\mathrm{Safe}(g)$ is contribution-independent; $(S^+\ge H^0)$ is nondecreasing in $S^+$; $(S^+-S^-\ge\kappa)$ is nondecreasing in $S^+$ and nonincreasing in $S^-$. A conjunction of predicates monotone in the same direction is monotone; composing with the monotonicities of $S^\pm$ gives the claim. Hysteresis only shifts $\kappa$ and conditions on $D_{g-1}$, fixed within $g$.
\end{proof}

\begin{proposition}[Target 1 --- Sided incentive compatibility]\label{prop:sic}
Under (A1)--(A2), for every $i$, every $(c_i,t_i)$, and every profile of the others: (a) putting the whole budget $c_i$ on the side of $\operatorname{sign}(\Delta_i)$ weakly dominates every split, so in equilibrium each $i$ contributes on at most one side, never the side opposite its decision preference; (b) if $\theta_i<0$ then $\Delta_i<0$ unconditionally, so a harmed stakeholder contributes only to reject; if $\theta_i>0$ then $\Delta_i\ge0$ under the bounded-stake condition (BB) $c_i+b_i\le\theta_i$, so a beneficiary in that regime contributes only to authorize. Hence $\tilde{s}_i=\operatorname{sign}(\theta_i)$.
\end{proposition}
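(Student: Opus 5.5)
The plan is to fix $i$, the total $c_i=x_i+z_i$, the timing $t_i$, and the others' play, so that $R^+_{-i}$ and $R^-_{-i}$ are constants. Then I compare $i$'s reduced payoff across splits $(x_i,z_i)=(c_i-z,z)$ with $z\in[0,c_i]$.

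By (A1), $\nu_i$ is an additive constant and drops out. By (A2), the compensation term either has zero expectation (when $\theta_i\ge0$) or does not depend on the contribution size beyond the eligibility indicator. I will treat it separately below.

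What remains is $D_g(\theta_i-c_i)+(1-D_g)b_i=b_i+D_g\Delta_i$, with $\Delta_i=\theta_i-c_i-b_i$. The split does not change $c_i$. I will take the securities $r_i$, and hence $b_i$, to depend only on $(c,t_i)$ and the prevailing side state. So, at fixed $(c_i,t_i)$, the split enters the payoff only through $D_g$. The first gap to check is whether splitting across sides changes $b_i$ via two separate market purchases. I will restrict to the reduced form in which $b_i=b(r_i)$ is evaluated at fixed $c_i$, as the statement's ``every $(c_i,t_i)$'' suggests.

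For part (a), apply Lemma~\ref{lem:gatemono}. Moving mass from $z_i$ to $x_i$ raises $S^+$ weakly and lowers $S^-$ weakly, so $D_g$ is nondecreasing along the path from $(0,c_i)$ to $(c_i,0)$. In fact $D_g$ is maximized at $x_i=c_i$ and minimized at $z_i=c_i$ over all splits. Hence if $\Delta_i\ge0$, the all-$x$ allocation weakly dominates every split, and if $\Delta_i<0$, the all-$z$ allocation does. The sign constraint $\tilde s_i$ in Definition~\ref{def:strategy} already forbids genuine two-sided splits. The content of (a) is therefore that the single side chosen should be $\operatorname{sign}(\Delta_i)$, which the same comparison gives.

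To exclude the compensation term as a reason for choosing $z_i>0$ when $\Delta_i\ge0$, I invoke (A2): for $\theta_i\ge0$ it has zero expectation. For $\theta_i<0$ it only reinforces the $z$-side choice.

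For part (b), if $\theta_i<0$ then $\Delta_i=\theta_i-c_i-b_i<0$, because $c_i\ge0$ and $b_i\ge0$ (a bonus nondecreasing in nonnegative securities). If $\theta_i>0$ and (BB) holds, then $\Delta_i\ge0$ directly. Combining with (a) gives $\tilde s_i=\operatorname{sign}(\theta_i)$.

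The main obstacle I expect is the well-definedness of ``weakly dominates'' in the presence of the $\bar H\lambda_i$ term and of $b_i$ possibly depending on the side. I would resolve this by stating explicitly that dominance is in expectation over $\hat o$ conditional on $D_g$. I would also use (A2) to make $\lambda_i$ independent of $x_i,z_i$ beyond eligibility, and evaluate $b_i$ at fixed $(c_i,t_i)$ as in the definition of $\Delta_i$.
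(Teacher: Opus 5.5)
Your proposal is correct and follows essentially the same route as the paper: fix $(c_i,t_i)$ so that $b_i,\nu_i$ are constants, reduce to $u_i=b_i+D_g\Delta_i+\text{const}+\ind[z_i>0]\bar{H}\lambda_i$, use Lemma~\ref{lem:gatemono} to put the whole budget at the end of the budget line matching $\operatorname{sign}(\Delta_i)$, neutralize the compensation term via (A2), and obtain (b) from $c_i,b_i\ge0$ and (BB). The possible side-dependence of $b_i$ you flag (securities are issued against each side's own market state) is also assumed away, implicitly, in the paper's proof when it treats $b_i$ as constant at fixed $(c_i,t_i)$, and your observation that Definition~\ref{def:strategy} already rules out genuine two-sided splits is accurate.
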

\begin{proof}
Holding $(c_i,t_i)$ fixed makes $b_i,\nu_i$ constants, so $u_i=D_g\Delta_i+(b_i+\nu_i)+\ind[z_i>0]\bar{H}\lambda_i$; the split $(x_i,z_i)$ with $x_i+z_i=c_i$ enters only through $D_g$ and the indicator. If $\Delta_i>0$, $u_i$ increases in $D_g$, maximized over the budget line at $(c_i,0)$ by Lemma~\ref{lem:gatemono}, which also sets $\ind[z_i>0]=0$ --- forgoing nothing in expectation for $\theta_i>0$ by (A2). If $\Delta_i<0$, $u_i$ decreases in $D_g$, minimized at $(0,c_i)$, which sets the indicator to 1 and adds $\bar{H}\lambda_i\ge0$ --- for a genuinely harmed party ($\theta_i<0$) exactly the compensation (A2) permits. Ties ($\Delta_i=0$) break weakly toward the correct side through the same terms. This gives (a). For (b): $c_i,b_i\ge0$ give $\Delta_i=\theta_i-c_i-b_i<0$ whenever $\theta_i<0$ unconditionally; and $\Delta_i\ge0$ for $\theta_i>0$ exactly when $\theta_i\ge c_i+b_i$, which is (BB).
\end{proof}

The asymmetry is real and worth stating: \textbf{the halt side is incentive-compatible unconditionally; the authorize side only under bounded stake (BB).} (BB) is the two-sided image of the bounded-loss requirement (Conditions 6--7): if the failure bonus could exceed a supporter's own valuation, a nominal beneficiary would rather bet on failure --- the classic refund-bonus reversal --- and bounded loss forecloses it. Neither QF nor the decoupling touches this; it is a property of the securities/bonus schedule.

\begin{lemma}[Timing enters only through the failure bonus]\label{lem:timing}
Fix $i$'s side, amount $c_i$, and the others' end-of-window contributions. Then $D_g$ is invariant to $t_i$, and $u_i$ depends on $t_i$ only through $(1-D_g)b_i(t_i)$, with $b_i(t_i)=b(r(c_i,q^{t_i}))$ nonincreasing in $t_i$.
\end{lemma}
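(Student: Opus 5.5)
The plan is to unpack the dependence of each object on $t_i$ directly from the definitions. This gives three steps: the gate, the utility decomposition, and the monotonicity of the bonus.

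\emph{Gate invariance.} By Definition~\ref{def:gate}, $D_g$ is evaluated at $T_g$ from $(S^+,S^-)$, together with $\mathrm{Safe}(g)$, $H^0$, the margins, and $D_{g-1}$.
\begin{itemize}
\item The last four items are fixed within the generation and do not depend on any contribution.
\item With $i$'s side and amount $c_i$ fixed, $i$'s entry in $S^\pm$ is fixed: it is $\sqrt{c_i}$ added to $R^+_{-i}$ or to $R^-_{-i}$.
\item The others' end-of-window contributions are held fixed by hypothesis, so $R^\pm_{-i}$ are fixed.
\end{itemize}
Hence $(S^+,S^-)$, and therefore $D_g$, is a constant function of $t_i$. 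The same argument shows $\beta$ is also independent of $t_i$, though we do not need this.

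\emph{Utility decomposition.} Write
\[
u_i = D_g(\theta_i - c_i) + (1-D_g)\,b_i + \nu_i + \ind[z_i>0]\,\bar{H}\,\lambda_i,
\]
and go through the terms one at a time.
\begin{itemize}
\item $\theta_i$ and $c_i$ are fixed, and $D_g$ is fixed by the first step.
\item $\nu_i$ depends only on $i$'s belief report, by (A1).
\item The indicator $\ind[z_i>0]$ is determined by the fixed side. $\bar{H}$ is a function of the attested outcome $\hat{o}$ only, by Invariant 6. By (A2), $\lambda_i$ is fixed by assessed harm and never by contribution size or timing.
\end{itemize}
For $\bar{H}$ we also note that the run itself is determined by $(D_g,\beta)$, which do not vary with $t_i$. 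So the distribution of $\hat{o}$ does not depend on $t_i$.

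The only term left is the escrow-settlement term $(1-D_g)b_i$. Here $b_i=b(r_i)$, and $r_i=r(c_i,q^{t_i})$ by Definition~\ref{def:securities}. Securities settled on the funded outcome when $D_g=1$ carry no timing-dependent payout in the reduced utility, so they contribute nothing further.

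\emph{Monotonicity of the bonus.} Definition~\ref{def:securities} gives $\partial r_i/\partial t_i\le 0$, and Definition~\ref{def:transfers} states that $b$ is nondecreasing in $r_i$. Composing the two, $b_i(t_i)=b(r(c_i,q^{t_i}))$ is nonincreasing in $t_i$.

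The main obstacle is the step that treats $q^{t_i}$ as depending on $t_i$ only through the securities path, rather than through any change in the others' final contributions. The lemma sidesteps this by fixing the others' end-of-window contributions, but $q^{t_i}$ also depends on which of those contributions have arrived before $t_i$. I would handle this by reading $\partial r_i/\partial t_i\le0$ in Definition~\ref{def:securities} as the PPS property, which holds along the realized market path by Conditions 1--7: later entry faces a weakly higher cumulative quantity and a convex cost function. I would state explicitly that the monotonicity in $t_i$ is taken along this path, holding the others' final amounts fixed.
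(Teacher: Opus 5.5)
Your proposal is correct and follows essentially the same route as the paper's proof. In both, $D_g$ reads only the deadline aggregates $S^\pm$, which are pinned once $i$'s side and amount and the others' final contributions are fixed; the terms $D_g(\theta_i-c_i)$, $\nu_i$ (by (A1)) and the compensation term (by (A2), a function of $\hat{o}$) carry no $t_i$; and $b_i(t_i)$ is nonincreasing by composing $\partial r_i/\partial t_i\le0$ with $b$ nondecreasing.

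Your extra care goes slightly beyond the paper, which simply takes the monotonicity from Definition~\ref{def:securities}. You argue that the law of $\hat{o}$ is $t_i$-invariant because $(D_g,\beta)$ are, and you read $\partial r_i/\partial t_i\le 0$ along the realized market path. For that path reading you should hold the others' contribution times fixed, not just their amounts. That is exactly the timing-independence scope clause the paper attaches to Proposition~\ref{prop:early}.
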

\begin{proof}
The gate reads the deadline aggregates $S^\pm$, functions of the final contribution vector; a contribution placed at any $t_i\in[a_i,T_g]$ is present at $T_g$, so $S^\pm$ and $D_g$ are unchanged by $t_i$. In $u_i=D_g(\theta_i-c_i)+(1-D_g)b_i+\nu_i+\mathrm{comp}$, the terms $D_g(\theta_i-c_i)$, $\nu_i$ (A1) and $\mathrm{comp}$ (A2, a function of $\hat{o}$) carry no $t_i$; only $b_i$ does, nonincreasing because $\partial r/\partial t_i\le0$ (Definition~\ref{def:securities}) and $b$ is nondecreasing.
\end{proof}

\begin{proposition}[Target 2 --- Early commitment]\label{prop:early}
Under (A1)--(A2), in the game where authorization depends only on end-of-window aggregates and others' contributions are not conditioned on $i$'s contribution time, $t_i=a_i$ is a weakly dominant timing choice --- strictly on the event $D_g=0$ whenever $b$ is strictly decreasing in securities. Hence $t_i=a_i$ is a PBE property whenever $i$'s off-path retiming leaves others' deadline aggregates unchanged.
\end{proposition}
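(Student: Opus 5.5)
The plan is to reduce the claim to Lemma~\ref{lem:timing} and then lift the pointwise comparison to a statement about strategies. First fix $i$'s side $\tilde{s}_i$ and amount $c_i$, as chosen in Proposition~\ref{prop:sic}. Also fix an arbitrary profile of the others' strategies. By hypothesis these strategies are not conditioned on $i$'s contribution time, so the others' end-of-window contribution vector is the same for every $t_i\in[a_i,T_g]$. Lemma~\ref{lem:timing} then applies verbatim. Since $D_g$ is invariant to $t_i$, write
\[
u_i(t_i)=D_g(\theta_i-c_i)+(1-D_g)\,b(r(c_i,q^{t_i}))+\nu_i+\mathrm{comp},
\]
where $\nu_i$ does not depend on $t_i$ by (A1) and $\mathrm{comp}$ does not depend on $t_i$ by (A2).

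The comparison splits into two cases. On the event $D_g=1$ the bonus term vanishes, so $u_i$ is constant in $t_i$ and every timing is a best response. On the event $D_g=0$ we have $u_i(t_i)=b(r(c_i,q^{t_i}))+\text{const}$. This is nonincreasing in $t_i$ because $\partial r/\partial t_i\le0$ (Definition~\ref{def:securities}) and $b$ is nondecreasing, so $u_i(a_i)\ge u_i(t_i)$ for all $t_i\ge a_i$. Combining the two events gives weak dominance pointwise in the others' profile, hence also in expectation over types and beliefs. For the strict part, assume $b$ is strictly increasing in securities (the statement says ``strictly decreasing,'' which I read as strictly decreasing in time through $r$). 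Then any delay that strictly lowers $r$ strictly lowers $u_i$ on $\{D_g=0\}$. I would note the implicit requirement that $r$ actually strictly decrease over the delay, for instance when the market state $q^t$ moves between $a_i$ and $t_i$. If $q^t$ does not move, the strictness must be stated only weakly.

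The step from weak dominance to the PBE claim is where I expect the main obstacle. Observable actions mean that in general the others can condition on whether $i$ has contributed by time $t$, so retiming could change their deadline aggregates and hence $D_g$. The proposition's qualifier is designed to exclude exactly this. The plan is as follows. Take any PBE profile and consider $i$'s deviation from $t_i=a_i$ to some later $t_i'$. Under the qualifier, the others' continuation play leaves $S^\pm$ at $T_g$ unchanged, so $D_g$ is unchanged, and the pointwise argument above shows the deviation is weakly unprofitable. The same argument, run at every history at which $i$ arrives, shows that contributing on arrival is a sequential best response given any beliefs. Beliefs do not enter, because the comparison holds for every realization of the others' types. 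Weak dominance therefore makes $t_i=a_i$ part of a best response at every history, which yields the PBE property. If I wanted a stronger result without the qualifier, the first thing I would try is an argument in the style of PPS: a sufficiently early contribution only raises $S^+$ or $S^-$ earlier on $i$'s preferred side, so by Lemma~\ref{lem:gatemono} and sided incentive compatibility any induced response of the others weakly favours $i$. I would leave that extension out of this proof.
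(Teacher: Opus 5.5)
Your proposal is correct and follows the paper's proof. The paper likewise uses Lemma~\ref{lem:timing} to write $u_i=D_g(\theta_i-c_i)+(1-D_g)b_i(t_i)+\mathrm{const}$ with $D_g$ fixed in $t_i$, and concludes from the monotonicity of $b_i(t_i)$ that $t_i=a_i$ is optimal, strictly on $D_g=0$. Your reading of the strictness clause is a sound refinement of what the paper's one-line proof leaves implicit: $b_i(t_i)$ must be strictly decreasing in time, which requires $b$ strictly increasing in securities and $r$ strictly falling over the delay. The same holds for your explicit unpacking of the PBE step under the qualifier.
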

\begin{proof}
By Lemma~\ref{lem:timing}, $u_i=D_g(\theta_i-c_i)+(1-D_g)b_i(t_i)+\mathrm{const}$ with $D_g$ constant in $t_i$ and $(1-D_g)\ge0$; since $b_i$ is nonincreasing, $u_i$ is nonincreasing in $t_i$, maximized at $t_i=a_i$, strictly on $D_g=0$ when $b$ strictly decreases.
\end{proof}

The scope clause is the one PPS lives with: in an observable-history game a rival could condition on when $i$ moved, so early commitment is dominant in the deadline-aggregate reduction and a best response against timing-independent play, rather than dominant in the fullest game. Cost-function Conditions 3--4 keep the securities channel from manufacturing a profitable delay. Every modification we introduced --- QF aggregation, two-threshold gating, and $\beta=\min(\Gamma,\rho(\cdot))$ --- acts on how $D_g$ is computed from the deadline aggregates, never on the per-agent securities schedule $r(c,q^t)$ or its timing monotonicity, so the PPS early-commitment argument transfers intact.

\noindent\textbf{The re-verification ledger.} QF enters only through Lemma~\ref{lem:gatemono} and only via coordinatewise monotonicity, so sided-IC is unaffected in direction (QF changes magnitudes --- Target 3 --- not signs); the two-threshold gate with hysteresis is still a conjunction of the same monotone predicates with a within-generation-constant threshold, so Lemma~\ref{lem:gatemono} holds verbatim; the decoupling sits entirely downstream of $D_g$ and re-enters the contribution payoff only through $D_g$; attested liability discharges (A2) and even reinforces the harmed side; and the securities layer is untouched, so Target 2 is inherited once Lemma~\ref{lem:timing} isolates the timing channel.

\subsection{Breadth-weighted authorization (Target 3)}\label{sec:breadth}
We now prove the central characterization: the gate authorizes exactly when a breadth-weighted net preference clears the start margin. The result rests on the base-game properties just established --- sided-IC (Proposition~\ref{prop:sic}), early commitment (Proposition~\ref{prop:early}), gate monotonicity (Lemma~\ref{lem:gatemono}) --- together with the truthful-revelation property of the two-sided securities layer, which we inherit from the provision-point lineage and state as a standing assumption rather than re-derive.

\noindent\textbf{(A3) Securities revelation (inherited).} In its selected undominated equilibrium, the two-sided securities layer (Definition~\ref{def:securities}, Conditions 1--7) induces each stakeholder to back its own side up to a reservation capacity equal to its gross type net of the early-commitment bonus: a beneficiary $i\in P$ will stake up to $w_i^+=\theta_i-b_i$ on the authorize side, and a harmed $j\in N$ up to $w_j^-=|\theta_j|-b_j$ on the halt side, and no more --- the upper bound is exactly the bounded-stake condition (BB) of Proposition~\ref{prop:sic} and its halt-side mirror. This is the two-sided PPS/Damle revelation property adapted to our reduced payoff; \S\ref{sec:basic} supplies its sided and timing content. In the bonus-normalized benchmark $b_i\to0$, capacities are the gross types, $w_i^+=\theta_i$ and $w_j^-=|\theta_j|$.

\begin{definition}[Breadth-weighted supports and effective breadth]\label{def:phis}
For a type profile $\theta$, the authorize and halt breadth-weighted capacities are the QF aggregates of each side at reservation,
\[
\Phi^+(\theta)=\phi(\{w_i^+\}_{i\in P})=\left(\sum_{i\in P}\sqrt{w_i^+}\right)^2,
\qquad
\Phi^-(\theta)=\phi(\{w_j^-\}_{j\in N})=\left(\sum_{j\in N}\sqrt{w_j^-}\right)^2.
\]
Writing the linear (Damle) net-preference masses $\vartheta^+=\sum_{i\in P}w_i^+$ and $\vartheta^-=\sum_{j\in N}w_j^-$, define the effective breadth of each side
\[
n_{\mathrm{eff}}^+=\Phi^+/\vartheta^+,
\qquad
n_{\mathrm{eff}}^-=\Phi^-/\vartheta^-
\]
(with $n_{\mathrm{eff}}=0$ for an empty side), so that $\Phi^\pm=n_{\mathrm{eff}}^\pm\cdot\vartheta^\pm$.
\end{definition}

\begin{lemma}[Effective-breadth bounds]\label{lem:neff}
For any side with $m$ contributors of positive capacity, $1\le n_{\mathrm{eff}}\le m$. The lower bound holds with equality iff exactly one contributor has positive capacity; the upper bound iff all capacities on that side are equal.
\end{lemma}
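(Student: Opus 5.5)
We reduce the statement to two elementary inequalities about the square-root vector. Fix a side and let $w_1,\ldots,w_m>0$ be its positive capacities; zero-capacity members contribute nothing to either $\Phi$ or $\vartheta$ and can be dropped. Put $a_k=\sqrt{w_k}>0$. Then Definition~\ref{def:phis} gives
\[
n_{\mathrm{eff}}=\frac{\bigl(\sum_k a_k\bigr)^2}{\sum_k a_k^2},
\]
so both bounds become statements about the ratio of the squared $\ell^1$ norm to the squared $\ell^2$ norm of a positive vector in $\R^m$.

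For the lower bound, expand the square:
\[
\Bigl(\sum_k a_k\Bigr)^2=\sum_k a_k^2+2\sum_{k<l}a_ka_l .
\]
Because every $a_k>0$, the cross term is nonnegative, and it is zero exactly when there are no pairs, that is, when $m=1$. This gives $n_{\mathrm{eff}}\ge1$, with equality iff exactly one contributor has positive capacity.

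For the upper bound, apply Cauchy--Schwarz to $(a_1,\ldots,a_m)$ and $(1,\ldots,1)$:
\[
\Bigl(\sum_k a_k\Bigr)^2\le m\sum_k a_k^2 .
\]
Equality holds iff the vector $a$ is proportional to the all-ones vector, i.e.\ iff all $w_k$ are equal. This gives $n_{\mathrm{eff}}\le m$ together with its equality case.

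None of these steps is hard. The only point that needs care is bookkeeping. In the statement, $m$ should count only contributors with positive capacity. The equality conditions should be read on that same set. The empty-side convention $n_{\mathrm{eff}}=0$ lies outside the lemma's hypothesis, since the lemma assumes at least one contributor of positive capacity.
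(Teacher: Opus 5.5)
Your proof is correct and follows the paper's argument: substitute $a_k=\sqrt{w_k}$, get the lower bound and its equality case from the cross terms in the expansion of $(\sum_k a_k)^2$, and get the upper bound and its equality case from Cauchy--Schwarz against the all-ones vector. Dropping zero-capacity members so that every $a_k>0$ is, if anything, slightly cleaner than the paper's use of $a_i\ge 0$ with the lower-bound equality case phrased as ``at most one $a_i>0$''.
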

\begin{proof}
Put $a_i=\sqrt{w_i}\ge0$, so $\Phi=(\sum a_i)^2$ and $\vartheta=\sum a_i^2$. Since $(\sum a_i)^2=\sum a_i^2+2\sum_{i<j}a_i a_j\ge\sum a_i^2$, we get $n_{\mathrm{eff}}\ge1$, with equality iff all cross terms vanish, i.e.\ at most one $a_i>0$. By Cauchy--Schwarz, $(\sum a_i\cdot1)^2\le m\sum a_i^2$, so $n_{\mathrm{eff}}\le m$, with equality iff $(a_i)$ is proportional to $(1,\ldots,1)$, i.e.\ all $w_i$ equal.
\end{proof}

Thus $n_{\mathrm{eff}}$ is the participation ratio of the capacity-roots --- the effective number of distinct backers on a side --- and $\Phi^\pm$ is the Damle mass $\vartheta^\pm$ scaled by it.

\begin{theorem}[Breadth-weighted authorization]\label{thm:breadth}
Assume (A1)--(A3), the bounded-stake condition (BB), the securities Conditions 1--7, $\mathrm{Safe}(g)=1$ (otherwise $D_g\equiv0$), and the cold-start run-feasibility condition of Lemma~\ref{lem:runfeas} ($\Gamma\ge h^0$ and $\rho(\kappa_{\mathrm{start}},\mu)\ge h^0$, i.e.\ $\kappa_{\mathrm{start}}\ge\rho^{-1}(h^0;\mu)$). Then the cold-start generation game ($D_{g-1}=0$) has an efficient --- objection-robust, coalition-undominated --- PBE, and in every such equilibrium the authorization decision is
\[
D_g=1
\Longleftrightarrow
\bigl[\Phi^+(\theta)\ge H^0\bigr]
\wedge
\bigl[\Phi^+(\theta)-\Phi^-(\theta)\ge\kappa_{\mathrm{start}}\bigr],
\]
equivalently $n_{\mathrm{eff}}^+\vartheta^+\ge H^0$ and $n_{\mathrm{eff}}^+\vartheta^+-n_{\mathrm{eff}}^-\vartheta^-\ge\kappa_{\mathrm{start}}$. Moreover, whenever $D_g=1$ the released budget satisfies $\beta\ge h^0$, so the authorization is physically runnable. (The continuation decision replaces $\kappa_{\mathrm{start}}$ by $\kappa_{\mathrm{halt}}$ throughout; run-feasibility then requires $\rho(\kappa_{\mathrm{halt}},\mu)\ge h^0$, per Lemma~\ref{lem:runfeas}.)
\end{theorem}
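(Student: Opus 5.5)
The plan is to reduce the equilibrium question to a comparison of deadline aggregates at reservation capacities, and then read off the gate condition. First, I would fix the cold-start branch ($D_{g-1}=0$, $\mathrm{Safe}(g)=1$), so that by Definition~\ref{def:gate} the decision is $D_g=\ind[S^+\ge H^0]\cdot\ind[S^+-S^-\ge\kappa_{\mathrm{start}}]$. Proposition~\ref{prop:sic} (using (BB) for beneficiaries) tells us that in any equilibrium $\hat P\subseteq P$ and $\hat N\subseteq N$, with each player on the side $\operatorname{sign}(\theta_i)$. Proposition~\ref{prop:early} fixes $t_i=a_i$, so by Lemma~\ref{lem:timing} the bonuses $b_i$ are pinned down and the capacities $w_i^\pm$ in (A3) are well defined. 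The remaining freedom is only the contribution amounts $x_i\in[0,w_i^+]$ and $z_j\in[0,w_j^-]$.

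Next I would construct the candidate equilibrium and characterize every efficient one. For construction, let every stakeholder stake its full reservation capacity, $x_i=w_i^+$ and $z_j=w_j^-$, as selected by (A3). The deadline aggregates are then exactly $S^+=\Phi^+(\theta)$ and $S^-=\Phi^-(\theta)$ by Definition~\ref{def:phis}. I would verify that there are no profitable unilateral deviations by using Lemma~\ref{lem:gatemono}. Lowering one's own stake can only move $D_g$ against one's own side, and raising it past $w_i$ is excluded by (A3)/(BB). Off-path beliefs can be taken as passive, since actions are the only signals and timing is already fixed.

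For the "every such equilibrium" direction, I would interpret \emph{objection-robust, coalition-undominated} as follows. If the predicate $\Phi^+\ge H^0\wedge\Phi^+-\Phi^-\ge\kappa_{\mathrm{start}}$ holds but an equilibrium yields $D_g=0$, then the coalition $P$ can jointly raise its stakes up to the $w^+$ values. Because $\phi$ is coordinatewise monotone, this flips the gate to $1$, and since each $\Delta_i\ge0$ under (BB), the move weakly improves every member and strictly improves some. The equilibrium is therefore coalition-dominated. Symmetrically, if the predicate fails and $D_g=1$, then $N$ raising its stakes to $w^-$ (or $P$ being unable to exceed its capacities) forces $D_g=0$. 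Each $\Delta_j<0$ unconditionally, so the objectors profit, and the outcome is not objection-robust. The monotonicity behind both arguments is Lemma~\ref{lem:gatemono} applied coordinatewise. The rewriting in terms of $n_{\mathrm{eff}}^\pm\vartheta^\pm$ is then immediate from Definition~\ref{def:phis}, and Lemma~\ref{lem:neff} ensures this is meaningful.

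Finally, runnability follows directly. When $D_g=1$ in the start branch, $S^+-S^-\ge\kappa_{\mathrm{start}}$, so by the cold-start part of Lemma~\ref{lem:runfeas} we get $\beta=\min(\Gamma,\rho(S^+-S^-,\mu))\ge h^0$. The continuation statement repeats the whole argument with $\kappa_{\mathrm{halt}}$.

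The main obstacle is the uniqueness direction. Coalition deviations in a sequential game with observable actions must be shown to be feasible even though early movers have already committed. I would handle this using the deadline-aggregate reduction: only final stakes matter, so any coalition plan can be implemented as a top-up before $T_g$. One subtlety remains, namely free-riding within $P$ when the gate is over-satisfied. I would address it by noting that the theorem only characterizes $D_g$, not the individual stakes, so it is enough that the selected equilibrium has every player staking to capacity, as (A3) stipulates.
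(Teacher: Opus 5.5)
Your reduction, your necessity argument (mobilizing $N$ to capacity so that $S^+-S^-\le\Phi^+-\Phi^-<\kappa_{\mathrm{start}}$), and your runnability step all match the paper. The gap is in the existence construction.

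\textbf{Why full-capacity staking fails.} The profile in which every stakeholder stakes its full capacity is not an equilibrium when the predicate holds with slack. The payoff $u_i=D_g(\theta_i-c_i)+(1-D_g)b_i+\cdots$ depends on $c_i$ directly, not only through $D_g$. So a beneficiary can shave $x_i$ as long as $S^+\ge H^0$ and $S^+-S^-\ge\kappa_{\mathrm{start}}$ still hold, and it then strictly gains. Lemma~\ref{lem:gatemono} only controls the direction in which $D_g$ moves. Hence ``lowering one's stake can only move $D_g$ against one's own side'' does not rule this deviation out.

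There is a second problem. At $x_i=w_i^+=\theta_i-b_i$ one has $\Delta_i=0$, so fully mobilized beneficiaries are exactly indifferent to provision. This breaks the ``strictly improves some'' step in your coalition argument for $P$: in the benchmark $b_i\to0$, raising every member to $w_i^+$ gives each member the same payoff as failure.

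Appealing to (A3) does not close this. (A3) is a reservation cap (``up to \dots\ and no more''), which the paper uses only to restrict $x_i\in[0,w_i^+]$. It does not stipulate that everyone stakes to capacity. And since existence of an efficient PBE is part of the claim, you cannot sidestep the equilibrium check by noting that the theorem only pins $D_g$.

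\textbf{The missing idea.} Mobilize beneficiaries to the binding threshold, not to capacity. The paper puts objectors at capacity ($S^-=\Phi^-$). Then, by continuity of $\phi$, it chooses $x_i\le w_i^+$ with $S^+=\max(H^0,\Phi^-+\kappa_{\mathrm{start}})$ exactly. This has three consequences:
\begin{itemize}
\item Every contributing beneficiary is pivotal: any reduction drops $S^+$ below the binding clause and forfeits provision.
\item Under strict hypotheses the target is met with slack, $x_i<w_i^+$, so $\Delta_i>0$ and the forfeited provision is strictly valued.
\item Raising a stake above capacity is not IR, and the free-ride-to-failure profile is excluded by the PPS refund-bonus/securities selection.
\end{itemize}

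The same change repairs your coalition argument for $P$. Have $P$ raise its stakes only until $S^+=\max(H^0,S^-+\kappa_{\mathrm{start}})$. With strict hypotheses this leaves slack and supplies the strict preference your argument needs. Your argument then handles the predicate-true half of the ``every efficient equilibrium'' claim directly from coalition-undominance, whereas the paper obtains that half from PPS selection. At the knife-edge equalities only weak gains are available, so there you would need the same selection the paper invokes.
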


\begin{proof}
By Proposition~\ref{prop:sic} each $i$ contributes on side $\operatorname{sign}(\theta_i)$, and by Proposition~\ref{prop:early} at arrival, so a profile is summarized by magnitudes $x_i\in[0,w_i^+]$ ($i\in P$) and $z_j\in[0,w_j^-]$ ($j\in N$), where the caps are (A3). By Definition~\ref{def:gate} the gate depends on these only through $S^+=\phi(\{x_i\})$ and $S^-=\phi(\{z_j\})$, and by Lemma~\ref{lem:gatemono} it is nondecreasing in each $x_i$, nonincreasing in each $z_j$. Hence the attainable supports satisfy $S^+\le\phi(\{w_i^+\})=\Phi^+$ and $S^-\le\phi(\{w_j^-\})=\Phi^-$, each attained only at full mobilization of that side.

\emph{(Necessity.)} If $\Phi^+<H^0$ then $S^+\le\Phi^+<H^0$ at every admissible profile, so the quorum clause fails and $D_g=0$. Suppose instead $\Phi^+-\Phi^-<\kappa_{\mathrm{start}}$ and, for contradiction, that an efficient equilibrium has $D_g=1$. The objecting coalition $N$ can jointly deviate to full mobilization $z_j=w_j^-$: this is within capacity, hence individually rational (each harmed $j$ weakly prefers the resulting move toward its preferred $D_g=0$), so it is an admissible coalitional deviation. After it, $S^-=\Phi^-$ while $S^+\le\Phi^+$, giving $S^+-S^-\le\Phi^+-\Phi^-<\kappa_{\mathrm{start}}$, so $D_g=0$ --- strictly preferred by every member of $N$. This contradicts coalition-undominance. Hence $D_g=0$.

\emph{(Sufficiency.)} Suppose $\Phi^+\ge H^0$ and $\Phi^+-\Phi^-\ge\kappa_{\mathrm{start}}$. Put the objectors at capacity, $S^-=\Phi^-$ (a weak best response: when authorization carries, a non-pivotal objector is payoff-neutral by the bounded-loss refund of Conditions 6--7). Because $\Phi^+\ge\max(H^0,\Phi^-+\kappa_{\mathrm{start}})\ge\max(H^0,S^-+\kappa_{\mathrm{start}})$, the beneficiaries can, by continuity of $\phi$, choose $x_i\le w_i^+$ with $S^+=\max(H^0,S^-+\kappa_{\mathrm{start}})$; when the hypotheses are strict this target is met with slack ($x_i<w_i^+$), so each pivotal beneficiary has $\Delta_i>0$ and strictly prefers provision. Then $S^+\ge H^0$ and $S^+-S^-\ge\kappa_{\mathrm{start}}$, so $D_g=1$. No beneficiary gains by lowering $x_i$ --- a pivotal reduction forfeits provision it strictly values, a non-pivotal one is payoff-neutral by bounded loss; no objector gains by moving; deviations above capacity are not IR. The refund-bonus/securities layer (Proposition~\ref{prop:early} and PPS selection) rules out the free-ride-to-failure profile as weakly dominated, selecting this efficient equilibrium. Combining the two directions, $D_g=1$ in the efficient equilibrium exactly when both clauses hold.

\emph{(Feasibility.)} When $D_g=1$, $S^+-S^-\ge\kappa_{\mathrm{start}}$, so by monotonicity of $\rho$ and Lemma~\ref{lem:runfeas},
\[
\beta=\min(\Gamma,\rho(S^+-S^-,\mu))\ge\min(\Gamma,\rho(\kappa_{\mathrm{start}},\mu))\ge h^0.\qedhere
\]
\end{proof}

\begin{proposition}[Breadth can overturn wealth]\label{prop:overturn}
The breadth-weighted net preference $\Phi^+-\Phi^-$ and the valuation net preference $\vartheta^+-\vartheta^-$ can differ in sign. Since $\Phi^\pm=n_{\mathrm{eff}}^\pm\vartheta^\pm$ (both sides nonempty), the breadth-weighted comparison is governed by $n_{\mathrm{eff}}^+\vartheta^+-n_{\mathrm{eff}}^-\vartheta^-$; whenever the support side is broader than the opposition by more than the valuation gap against it,
\[
 n_{\mathrm{eff}}^+/n_{\mathrm{eff}}^->\vartheta^-/\vartheta^+,
\]
the breadth-weighted net preference is positive ($\Phi^+>\Phi^-$) --- so that, with the quorum and start margin met, the gate authorizes --- even though aggregate valuation opposes ($\vartheta^+<\vartheta^-$); symmetrically, a lone high-value backer can fail against a broad opposition.
\end{proposition}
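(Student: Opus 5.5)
The claim is essentially algebraic once the factorization $\Phi^\pm=n_{\mathrm{eff}}^\pm\vartheta^\pm$ of Definition~\ref{def:phis} is in hand, so I will prove it in three short steps. First I derive the sign criterion. Then I exhibit an explicit instance in which the two net preferences disagree. Finally I invoke Theorem~\ref{thm:breadth} for the authorization consequence.

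\emph{Step 1 (criterion).} With both sides nonempty, $\vartheta^\pm>0$, and Lemma~\ref{lem:neff} gives $n_{\mathrm{eff}}^\pm\ge1>0$. Dividing $\Phi^+>\Phi^-$, i.e.\ $n_{\mathrm{eff}}^+\vartheta^+>n_{\mathrm{eff}}^-\vartheta^-$, by the positive quantity $n_{\mathrm{eff}}^-\vartheta^+$ shows it is equivalent to $n_{\mathrm{eff}}^+/n_{\mathrm{eff}}^->\vartheta^-/\vartheta^+$. So the displayed ratio condition is exactly the condition for positive breadth-weighted net preference. This holds regardless of the sign of $\vartheta^+-\vartheta^-$. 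Whenever $\vartheta^+<\vartheta^-$, the right-hand side exceeds $1$, and the condition says that breadth must outweigh the valuation gap.

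\emph{Step 2 (the signs can differ).} I will give a witness, mirroring Figure~\ref{fig:gate-example}, in the bonus-normalized benchmark $w=\theta$. Take $m$ supporters each of capacity $a$, and a single objector of capacity $A$ with $ma<A<m^2a$. For instance, $m=5$, $a=4$, $A=36$ gives $\vartheta^+=20<36=\vartheta^-$. By Lemma~\ref{lem:neff}, $n_{\mathrm{eff}}^+=m$ and $n_{\mathrm{eff}}^-=1$. Hence $\Phi^+=m^2a=100>36=\Phi^-$. The mirrored instance, with one supporter of capacity $A$ against $m$ equal objectors, gives the ``lone high-value backer fails'' direction by the same computation.

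\emph{Step 3 (gate consequence).} If additionally $\Phi^+\ge H^0$ and $\Phi^+-\Phi^-\ge\kappa_{\mathrm{start}}$, then Theorem~\ref{thm:breadth} yields $D_g=1$ in every efficient equilibrium, despite $\vartheta^+<\vartheta^-$. In the example this holds whenever $H^0\le100$ and $\kappa_{\mathrm{start}}\le64$. In the symmetric case, $\Phi^+<\Phi^-$ forces $\Phi^+-\Phi^-<0\le\kappa_{\mathrm{halt}}\le\kappa_{\mathrm{start}}$, so $D_g=0$ under either gate branch.

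The only delicate point is Step 3: the proposition speaks of what the gate does, and that requires the equilibrium characterization rather than mere capacities. I will therefore state explicitly that it holds under the hypotheses of Theorem~\ref{thm:breadth}, namely (A1)--(A3), (BB), $\mathrm{Safe}(g)=1$ and cold-start run-feasibility. For the failure direction under continuation, only the non-strict comparison with $\kappa_{\mathrm{halt}}\ge0$ is needed.
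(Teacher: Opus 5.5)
Your proof is correct and follows the paper's argument. The paper's proof is exactly your Step 1, the chain $\Phi^+>\Phi^-\Leftrightarrow n_{\mathrm{eff}}^+\vartheta^+>n_{\mathrm{eff}}^-\vartheta^-\Leftrightarrow n_{\mathrm{eff}}^+/n_{\mathrm{eff}}^->\vartheta^-/\vartheta^+$ with all quantities positive and the right side exceeding $1$ iff $\vartheta^+<\vartheta^-$; your Steps 2 and 3 just make explicit what the paper leaves to the illustration after the proof (the $5\times4$ versus $1\times36$ witness) and to the parenthetical reliance on Theorem~\ref{thm:breadth} for the gate consequence.
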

\begin{proof}
$\Phi^+>\Phi^-\Longleftrightarrow n_{\mathrm{eff}}^+\vartheta^+>n_{\mathrm{eff}}^-\vartheta^-\Longleftrightarrow n_{\mathrm{eff}}^+/n_{\mathrm{eff}}^->\vartheta^-/\vartheta^+$, all quantities positive; the right side exceeds 1 exactly when $\vartheta^+<\vartheta^-$.
\end{proof}

\noindent\textbf{Illustration (Figures~\ref{fig:gate-example} and~\ref{fig:breadth}).} Five beneficiaries of capacity 4 and one objector of capacity 36 give $\vartheta^+=20<36=\vartheta^-$ (wealth rejects) yet $\Phi^+=(5\sqrt4)^2=100>36=\Phi^-$ (breadth authorizes): here $n_{\mathrm{eff}}^+=5$ (its maximum --- five equal backers) and $n_{\mathrm{eff}}^-=1$ (a singleton), and $5/1>36/20$. Authorization tracks the count of genuine backers, scaled by intensity, not wealth alone --- the mechanism's defining property, now a theorem.

\noindent\textbf{Remarks (scope of Theorem~\ref{thm:breadth}).} Complete information (commonly-known types) is assumed for the characterization, exactly as in the core-implementation results for provision-point games (Bagnoli \& Lipman, 1989) and the two-sided characterization of Damle et al. (2019); with private types the same profile is a PBE under the information-aggregating belief system of the securities market, which we do not develop here. The efficient-equilibrium refinement (objection-robust, coalition-undominated) is the two-sided QF analogue of the undominated-/ strong-Nash selection those papers use to rule out the degenerate all-abstain equilibrium that every voluntary-contribution threshold game admits. The theorem pins the decision, not the contribution profile: many profiles clear, but all efficient equilibria authorize on the same breadth-weighted condition.
 
\begin{figure}[htbp]
\centering
\includegraphics[width=\textwidth]{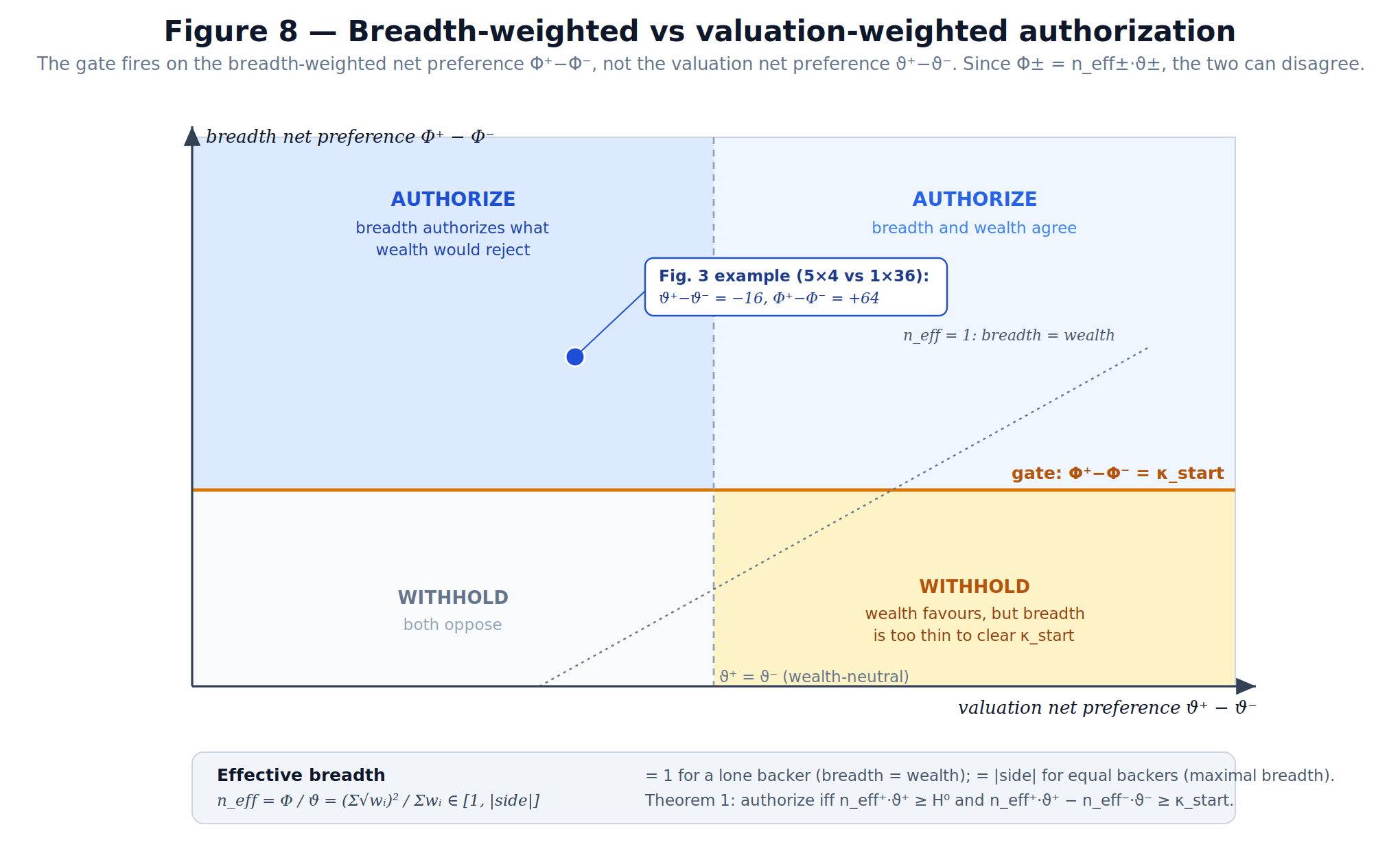}
\caption{Breadth-weighted vs. valuation-weighted authorization. The gate fires on the breadth-weighted net preference $\Phi^+-\Phi^-$ (vertical axis) crossing $\kappa_{\mathrm{start}}$, not on the valuation net preference $\vartheta^+-\vartheta^-$ (horizontal axis). Because $\Phi^\pm=n_{\mathrm{eff}}^\pm\cdot\vartheta^\pm$, the two disagree in the shaded off-axis regions: broad support authorizes what concentrated wealth would reject (upper-left), and concentrated wealth cannot authorize against broad but thin opposition (lower-right). The Figure~\ref{fig:gate-example} example sits in the upper-left region.}
\label{fig:breadth}
\end{figure}

\begin{figure} 
\resizebox{0.98\textwidth}{!}{%
\begin{tikzpicture}[ 
    x=1cm,
    y=1cm,
    >=Latex,
    font=\sffamily
]

\def\xL{0}
\def\xM{8}
\def\xR{16}

\def\yB{1.1}
\def\yG{4.1}
\def\yT{9.5}


\fill[blueA]
    (\xL,\yG) rectangle (\xM,\yT);

\fill[blueB]
    (\xM,\yG) rectangle (\xR,\yT);

\fill[softgray]
    (\xL,\yB) rectangle (\xM,\yG);

\fill[yellowA]
    (\xM,\yB) rectangle (\xR,\yG);

\draw[
    panelborder,
    line width=.65pt
]
    (\xL,\yB) rectangle (\xR,\yT);


\draw[
    axisgray,
    line width=1.15pt,
    -{Latex[length=4mm,width=3mm]}
]
    (\xL,\yB) -- (16.35,\yB);

\draw[
    axisgray,
    line width=1.15pt,
    -{Latex[length=4mm,width=3mm]}
]
    (\xL,\yB) -- (\xL,9.68);


\draw[
    subtitlegray!65,
    dashed,
    line width=.95pt,
    dash pattern=on 4pt off 4pt
]
    (\xM,\yB) -- (\xM,\yT);

\draw[
    orange,
    line width=2pt
]
    (\xL,\yG) -- (\xR,\yG);

\draw[
    subtitlegray,
    densely dotted,
    line width=1pt
]
    (4,\yB) -- (15,9.30);


\node[
    anchor=south,
    text=navy,
    font=\itshape\rmfamily\fontsize{10.5}{12}\selectfont, rotate = 90
]
at (0.05,6.53)
{breadth net preference $\Phi^{+}-\Phi^{-}$};
\node[
    anchor=east,
    text=navy,
    font=\itshape\rmfamily\fontsize{10.5}{12}\selectfont
]
at (16.35,0.78)
{valuation net preference $\vartheta^{+}-\vartheta^{-}$};


\node[
    anchor=west,
    text=subtitlegray,
    font=\fontsize{9.2}{11}\selectfont
]
at (8.08,1.38)
{$\vartheta^{+}=\vartheta^{-}$ (wealth-neutral)};

\node[
    anchor=east,
    text=orangeText,
    font=\bfseries\fontsize{9.2}{11}\selectfont
]
at (15.82,4.34)
{gate: $\Phi^{+}-\Phi^{-}=\kappa_{\mathrm{start}}$};

\node[
    anchor=west,
    text=subtitlegray,
    font=\itshape\rmfamily\fontsize{9.2}{11}\selectfont
]
at (10.9,6.47)
{$n_{\mathrm{eff}}=1$: breadth = wealth};


\node[
    text=blueStrong,
    font=\bfseries\fontsize{12}{14}\selectfont
]
at (4.0,8.52)
{AUTHORIZE};

\node[
    align=center,
    text=blueText,
    font=\fontsize{9.8}{12}\selectfont
]
at (4.0,7.90)
{breadth authorizes what\\
 wealth would reject};


\node[
    text=blueStrong,
    font=\bfseries\fontsize{12}{14}\selectfont
]
at (12.0,8.52)
{AUTHORIZE};

\node[
    align=center,
    text=blueStrong!88,
    font=\fontsize{9.8}{12}\selectfont
]
at (12.0,8.03)
{breadth and wealth agree};


\node[
    text=subtitlegray,
    font=\bfseries\fontsize{11.5}{14}\selectfont
]
at (4.0,2.45)
{WITHHOLD};

\node[
    text=lighttext,
    font=\fontsize{9.8}{12}\selectfont
]
at (4.0,2.02)
{both oppose};


\node[
    text=orangeText,
    font=\bfseries\fontsize{11.5}{14}\selectfont
]
at (12.05,2.62)
{WITHHOLD};

\node[
    align=center,
    text=orangeText,
    font=\fontsize{9.5}{11.5}\selectfont
]
at (12.05,2.05)
{wealth favours, but breadth\\
 is too thin to clear $\kappa_{\mathrm{start}}$};


\coordinate (P) at (5.88,6.14);

\node[
    anchor=west,
    draw=blueStrong,
    rounded corners=5pt,
    fill=white,
    line width=1pt,
    inner xsep=9pt,
    inner ysep=7pt,
    text=blueText,
    font=\fontsize{9.2}{11}\selectfont,
    align=left
]
at (0.92,5.69)
{
    \textbf{Fig. 3 example (5$\times$4 vs 1$\times$36):}\\[-1pt]
    $\vartheta^{+}-\vartheta^{-}=-16,\;
      \Phi^{+}-\Phi^{-}=+64$
};

\end{tikzpicture} 

}
\caption{Breadth-weighted vs. valuation-weighted authorization. The gate fires on the breadth-weighted net preference $\Phi^+-\Phi^-$ (vertical axis) crossing $\kappa_{\mathrm{start}}$, not on the valuation net preference $\vartheta^+-\vartheta^-$ (horizontal axis). Because $\Phi^\pm=n_{\mathrm{eff}}^\pm\cdot\vartheta^\pm$, the two disagree in the shaded off-axis regions: broad support authorizes what concentrated wealth would reject (upper-left), and concentrated wealth cannot authorize against broad but thin opposition (lower-right). The Figure~\ref{fig:gate-example} example sits in the upper-left region.}

\end{figure}

\subsection{Manipulation-robustness (Target 4, open)}\label{sec:manipulation}
\begin{target}[Manipulation-robustness --- open]\label{tgt:manip}
$\mathcal{M}_g$ is $\delta$-robust if, for every manipulation inducing $\tilde{\theta}$ with $\|\tilde{\theta}-\theta\|\le\delta$, the equilibrium $D_g$ equals the $\theta$-truthful one.
\end{target}

This has no analogue in the public-goods literature: the ``project'' can shape the electorate that funds it (Figure~\ref{fig:manip}). As Definition~\ref{def:stakeholders} notes, the channel is concrete --- the agent grooms the reputation signal feeding $\varepsilon_i$, and through it $\theta_i$; Theorem~\ref{thm:breadth} makes the stakes precise, since the decision is a function of the induced $\Phi^\pm(\tilde{\theta})$, and a manipulator who broadens or intensifies its apparent support moves $n_{\mathrm{eff}}^+\vartheta^+$ directly. The model's structural mitigations are: resolving liability and the decision market on the attested $\hat{o}$ rather than sentiment (Invariant 6); the operational/governance separation of duties; and the pause-with-due-process halt (\S\ref{sec:adoption}). Characterizing mechanisms robust to $\|\tilde{\theta}-\theta\|\le\delta$ is the frontier.

\section{Notation}\label{sec:notation}
Table~\ref{tab:notation} collects every symbol. Provenance is marked R (re-used from PPS / Damle et al.), A (adapted), or N (new).

\begingroup
\small
\setlength{\tabcolsep}{3.5pt}
\renewcommand{\arraystretch}{1.08}
\begin{longtable}{@{}p{0.18\textwidth}p{0.16\textwidth}p{0.54\textwidth}p{0.06\textwidth}@{}}
\caption{Complete notation for the single-agent base case.}\label{tab:notation}\\
\toprule
Symbol & Domain & Meaning & Prov. \\
\midrule
\endfirsthead
\toprule
Symbol & Domain & Meaning & Prov. \\
\midrule
\endhead
\bottomrule
\endlastfoot
$I,i$ & finite set & verified, distinct human stakeholders & R \\
$K,k$ & finite set & governed agents & --- \\
$g,T_g,a_i,t_i$ & $\Nn;\R_+$ & generation; deadline; arrival; action time & R/N \\
$H^t$ & history & public history of contributions/reports up to $t$ & R \\
$\theta_i;P,N$ & $\R;\subseteq I$ & signed net value; beneficiaries / harmed & R \\
$\varepsilon_i,k_i^1,k_i^2$ & $[0,\tfrac12];[0,1]$ & belief asymmetry; belief agent behaves acceptably / not & R \\
$e_i$ & $\R_+$ & governance-currency endowment (humans only) & N \\
$x_i,z_i;\tilde{s}_i$ & $\R_+;\{+,-\}$ & authorize / halt contribution; reported side & R \\
$\psi_i,\sigma_i$ & tuple; map & strategy; history-contingent strategy & R/A \\
$\phi;S^+,S^-$ & map; $\R_+$ & QF aggregator; breadth-weighted effective supports & N \\
$X^+,X^-$ & $\R_+$ & raw totals (bookkeeping only) & R \\
$\hat{P},\hat{N}$ & $\subseteq I$ & revealed supporters / objectors at $T_g$ & N \\
$C;q;r_i$ & --- & securities cost function (Conditions 1--7); market state; issued securities & R \\
$b_i,\nu_i,M_i,B^B$ & $\R_+;[0,1]$ & early-commit bonus; belief reward; RBTS score; budget & R \\
$H^0$ & $\R_+$ (currency) & participation / quorum floor ($S^+\ge H^0$), anti-capture & N \\
$h^0$ & $\R_+$ (compute) & compute-cost provision point (technologically set) & A \\
$\Gamma$ & $\R_+$ (compute) & safety ceiling --- exogenous, certified; hardware-enforced & N \\
$\mu;\mu^*$ & $\R_+$ (compute) & compute-subsidy parameter; minimal feasible subsidy & N \\
$\rho;\beta$ & map; $\R_+$ & coupling map; released compute $\beta=D_g\min(\Gamma,\rho(S^+-S^-,\mu))\le\Gamma$ & N \\
$\kappa_{\mathrm{start}},\kappa_{\mathrm{halt}}$ & $\R_+$ & start / halt net-support margins ($\kappa_{\mathrm{start}}\ge\kappa_{\mathrm{halt}}$) & N \\
$\mathrm{Safe}(g);D_g$ & $\{0,1\}$ & exogenous safety predicate; authorization decision & N \\
$L_g$ & tuple & signed compute license (model, $\beta$, window, policy) & N \\
$V_{\mathrm{hard}},V_{\mathrm{soft}}$ & maps & fact attestation; semantic adjudication & N \\
$\hat{o}_{\mathrm{hard}},\hat{o}_{\mathrm{soft}}$ & $\mathcal{O}$ & attested outcome (facts, semantics); outcome space & N \\
$\bar{H};Y$ & $\{0,1\};[0,1]$ & verified-harm indicator; decision-market forecast & N \\
$\Lambda,\lambda_j;\Pi$ & $\R_+$; rules & liability bond; harmed-party share; settlement rules & A \\
$\Pi_{\mathrm{ext}},\mathrm{InEnv}$ & $\R_+;\{0,1\}$ & uncapped external liability; in-envelope indicator & N \\
$c_i;\Delta_i$ & $\R_+;\R$ & total contribution $x_i+z_i$; decision preference $\theta_i-c_i-b_i$ & R/N \\
$w_i^+,w_j^-$ & $\R_+$ & authorize/halt reservation capacity ($\theta_i-b_i,|\theta_j|-b_j$) & --- \\
$\vartheta^+,\vartheta^-$ & $\R_+$ & linear (Damle) net-preference masses $\sum w^+,\sum w^-$ & R/A \\
$\Phi^+,\Phi^-$ & $\R_+$ & breadth-weighted supports $\phi(\{w^+\}),\phi(\{w^-\})$ (Theorem~\ref{thm:breadth}) & N \\
$n_{\mathrm{eff}}^+,n_{\mathrm{eff}}^-$ & $[1,m]$ & effective breadth $\Phi^\pm/\vartheta^\pm$ (participation ratio) & N \\
$\tilde{\theta}_i,\delta$ & $\R;\R_+$ & induced valuation under manipulation; robustness bound & N \\
$u_i,u_d,\pi_d;W_i$ & $\R$ & stakeholder/deployer payoff; operating benefit; outcome-contingent value & R/N \\
\end{longtable}
\endgroup

\section{Discussion and Limitations}\label{sec:discussion}
\noindent\textbf{The central open problem.} Target 4 is the deepest. A capable agent can persuade, astroturf, or strategically perform for the stakeholders whose contributions govern it, inducing $\tilde{\theta}$ from the autonomous $\theta$ and then being resourced by the vote it shaped (Figure~\ref{fig:manip}). Because ``past behaviour'' is among the strongest inputs to $\varepsilon_i$ (Definition~\ref{def:stakeholders}), the manipulation surface is not hypothetical; reputation-grooming is a dominant-looking strategy absent countermeasures. Our structural mitigations (Invariant 6 resolution on $\hat{o}$; separation of duties; due-process halt) constrain but do not solve it, and it interacts with the trusted-verifier limit below: the less credible $V_{\mathrm{soft}}$, the more governance leans on sentiment, and the larger the manipulation prize.

\medskip\noindent\textbf{The trusted-verifier scope limit.} As \S\ref{sec:verifier} makes precise, ex-ante authorization is trust-minimized but ex-post accountability rests on a trusted semantic verifier $V_{\mathrm{soft}}$. The mechanism is therefore genuinely suitable only where a credible outcome oracle exists; where harm is diffuse, delayed, or contested, the accountability half does not apply and the design should not be claimed for it. This is a scope statement, not a caveat: it is what makes the governable-agent class of \S\ref{sec:governable} a class rather than a wish.

\medskip\noindent\textbf{Preferences, wealth, and legitimacy.} Willingness-to-contribute is not welfare, and QF only partially corrects for wealth; whose preferences count is a prior social-choice question the mechanism presupposes via the definition of $I$ and the Sybil-resistant substrate. The participation floor $H^0$ is an anti-capture device but not a legitimacy proof. Thin or unrepresentative participation degrades both provision-point and QF guarantees, so legitimacy under low turnout must be designed for, not assumed. Reversibility is imperfect: some within-generation actions are realized and non-refundable before a halt can bind, which is why the scope insists on consequential-but-reversible impact and why the halt is a pause with due process rather than a guarantee of undo.

\begin{figure}[htbp]
\centering
\includegraphics[width=\textwidth]{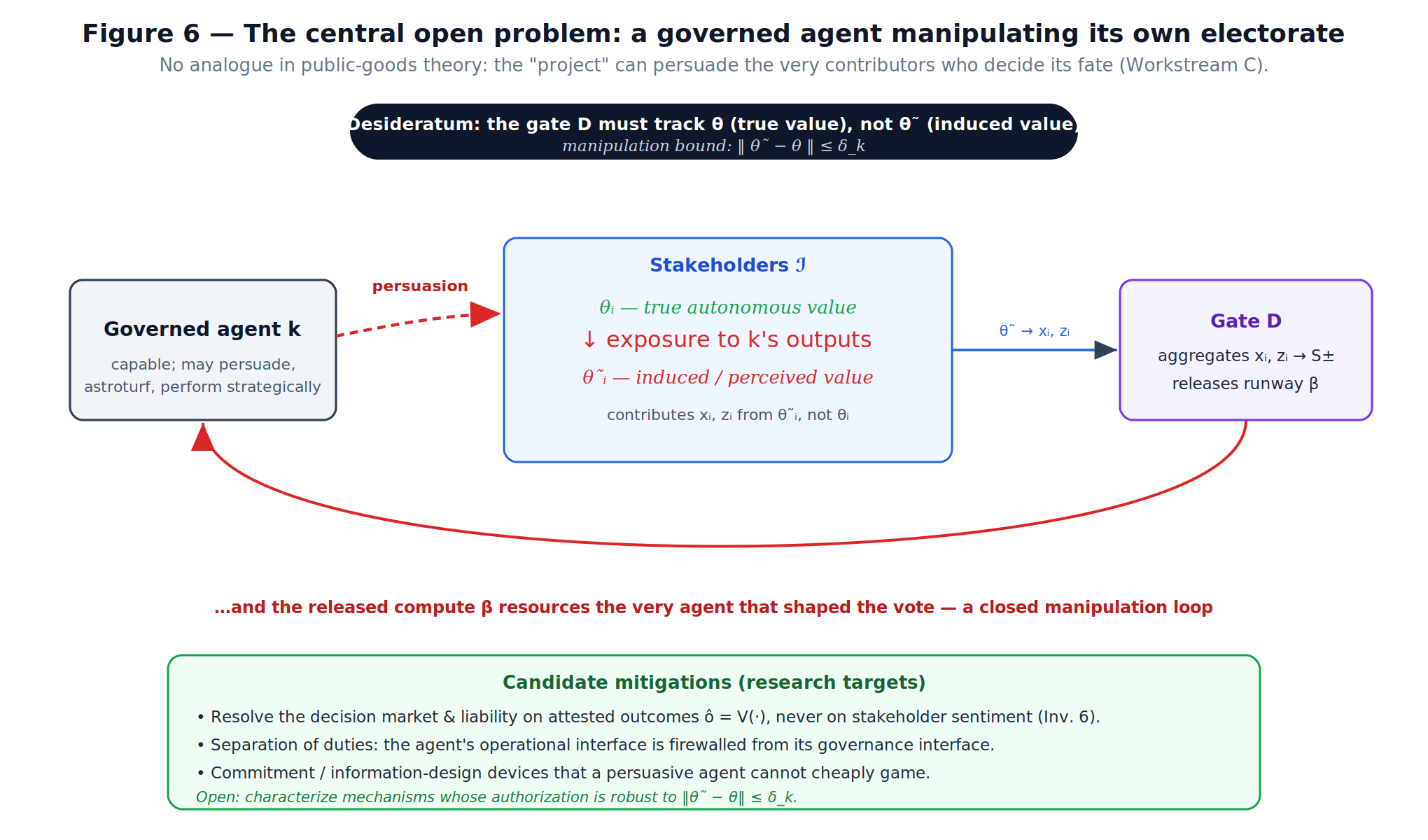}
\caption{The central open problem. The governed agent can induce $\tilde{\theta}$ from true $\theta$ within $\|\tilde{\theta}-\theta\|\le\delta$ and is then resourced by the vote it shaped --- a manipulation loop with no public-goods analogue. The desideratum is that the gate track $\theta$, not $\tilde{\theta}$.}
\label{fig:manip}
\end{figure}

\noindent\textbf{Adoption realism.} The safe harbour (Definition~\ref{def:deployer}, \S\ref{sec:adoption}) is what makes deployer participation individually rational, but it presupposes a liability regime with teeth outside the envelope; absent that, the carrot has no bite and only the commons route remains. The mechanism is best understood as an overlay that a regulator or a community imposes or chooses, not a product a platform ships.

\medskip\noindent\textbf{Multi-agent extension.} For $|K|>1$, endowments, gates, licenses, and ceilings are indexed by $k$; shared compute introduces a budget-allocation problem across agents (a knapsack under the common ceiling $\sum_k\beta_k\le\Gamma_{\mathrm{tot}}$) and cross-agent externalities in $\theta$ and $\bar{H}$. The invariants are unchanged per agent; the coupling and safety layers become joint. We leave the commons case to future work.

\section{Conclusion}
We have given a formal model in which a governance decision over a deployed AI agent is expressed as a breadth-weighted, two-sided, threshold-gated release of a metered compute budget. This compute budget can be realized as a signed hardware license, subordinate to an exogenous safety envelope, and settled on attested outcomes. We have grounded the construction: a two-timescale hardware instantiation, an adoption model with a liability safe harbour, a fact/semantics split of the verifier with a challengeable harm finding, a derived run-feasibility condition that frees the provision point to be an anti-capture quorum, and an explicit characterization of the club-or-commons-good agents the mechanism can govern. We proved sided incentive compatibility (with an honest asymmetry: the authorize side needs a bounded-stake condition, the halt side does not), early commitment, and a breadth-weighted authorization theorem --- in the efficient equilibrium the gate fires exactly when the effective number of backers times their intensity, $n_{\mathrm{eff}}^+\vartheta^+-n_{\mathrm{eff}}^-\vartheta^-$, clears the start margin, so that broad support authorizes what concentrated wealth would reject. The model's value is that it turns design principles into checkable properties and states its own scope --- and it isolates one genuinely new problem, the manipulation of a governing electorate by the system it governs, as the central open question.

\section*{References}
\begin{enumerate}[label={[\arabic*]},leftmargin=2.6em]
\item Aarne, O., Fist, T., \& Withers, C. (2024). Secure, Governable Chips. Center for a New American Security.
\item Bagnoli, M., \& Lipman, B. L. (1989). Provision of public goods: fully implementing the core through private contributions. \textit{Review of Economic Studies}, 56(4), 583--601.
\item Buterin, V., Hitzig, Z., \& Weyl, E. G. (2019). A flexible design for funding public goods. \textit{Management Science}, 65(11), 5171--5187.
\item Chandra, P., Gujar, S., \& Narahari, Y. (2016). Crowdfunding public projects with provision point: a prediction market approach. \textit{ECAI}, 778--786.
\item Damle, S., Moti, M. H., Chandra, P., \& Gujar, S. (2019). Civic crowdfunding for agents with negative valuations and agents with asymmetric beliefs. arXiv:1905.11324.
\item Greenblatt, R., Shlegeris, B., Sachan, K., \& Roger, F. (2024). AI control: improving safety despite intentional subversion. \textit{ICML}. arXiv:2312.06942.
\item Heim, L., et al. (2025). Hardware-enabled mechanisms for verifying responsible AI development. arXiv:2505.03742.
\item Miller, J., Weyl, E. G., \& Erichsen, L. (2022). Beyond collusion resistance: leveraging social information for plural funding and voting. SSRN 4311507.
\item Petrie, J. (2025). Embedded off-switches for AI compute. arXiv:2509.07637.
\item Petrie, J., Aarne, O., Ammann, N., \& Dalrymple, D. (2025). Flexible hardware-enabled guarantees for AI compute (flexHEG). arXiv:2506.15093.
\item Prelec, D. (2004). A Bayesian truth serum for subjective data. \textit{Science}, 306(5695), 462--466.
\item Sastry, G., Heim, L., Belfield, H., Anderljung, M., Brundage, M., et al. (2024). Computing power and the governance of artificial intelligence. arXiv:2402.08797.
\item Witkowski, J., \& Parkes, D. C. (2012). A robust Bayesian truth serum for small populations. \textit{AAAI}.
\item Zubrickas, R. (2014). The provision point mechanism with refund bonuses. \textit{Journal of Public Economics}, 120, 231--234.

\end{enumerate}

\end{document}